\def\eps{\epsilon}
\long\def\comment#1{}
\newcommand{\bel}{\begin{eqnarray}\label}
\newcommand{\eel}{\end{eqnarray}}
\newcommand{\bes}{\begin{eqnarray*}}
	\newcommand{\ees}{\end{eqnarray*}}
\newcommand{\abs}[1]{\left|#1\right|}
\def\##1\#{\begin{align}#1\end{align}}
\def\$#1\${\begin{align*}#1\end{align*}}
\newcommand{\e}{\bm{e}}
\newcommand{\g}{\bm{g}}
\newcommand{\x}{\bm{x}}
\newcommand{\y}{\bm{y}}
\newcommand{\z}{\bm{z}}
\newcommand{\Ab}{\mathbf{A}}
\newcommand{\Db}{\mathbf{D}}
\newcommand{\Fb}{\mathbf{F}}
\newcommand{\Hb}{\mathbf{H}}
\newcommand{\Ib}{\mathbf{I}}
\newcommand{\Mb}{\mathbf{M}}
\newcommand{\Nb}{\mathbf{N}}
\newcommand{\Pb}{\mathbf{P}}
\newcommand{\Tb}{\mathbf{T}}
\newcommand{\Ub}{\mathbf{U}}
\newcommand{\cC}{\mathcal{C}}
\newcommand{\cE}{\mathcal{E}}
\newcommand{\cG}{\mathcal{G}}
\newcommand{\cH}{\mathcal{H}}
\newcommand{\cI}{\mathcal{I}}
\newcommand{\cL}{\mathcal{L}}
\newcommand{\cN}{\mathcal{N}}
\newcommand{\EE}{\mathbb{E}}
\newcommand{\RR}{\mathbb{R}}
\newcommand{\SSS}{\mathbb{S}}
\newcommand{\bzeta}{\bm{\zeta}}
\newcommand{\bxi}{\bm{\xi}}
\newcommand{\bPhi}{\bm{\Phi}}
\newcommand{\bPsi}{\bm{\Psi}}
\newcommand{\Phib}{\mathbf{\Phi}}
\newcommand{\Psib}{\mathbf{\Psi}}
\newcommand{\argmin}{\mathop{\mathrm{argmin}}}
\newcommand{\sign}{\mathop{\mathrm{sign}}}
\newcommand{\rank}{\mathop{\mathrm{rank}}}
\newcommand{\supp}{\mathop{\mathrm{supp}}}
\newcommand{\median}{\mathop{\mathrm{median}}}
\global\long\def\K{\bm{K}}
\global\long\def\A{\bm{A}}
\global\long\def\x{\bm{x}}
\global\long\def\g{\bm{g}}
\global\long\def\c{\bm{c}}
\global\long\def\y{\bm{y}}
\global\long\def\z{\bm{z}}
\global\long\def\e{\bm{e}}
\global\long\def\d{\bm{d}}
\global\long\def\b{\bm{b}}
\global\long\def\a{\bm{a}}
\global\long\def\u{\bm{u}}
\global\long\def\v{\bm{v}}
\global\long\def\M{\bm{M}}
\newtheoremstyle{mytheoremstyle} 
    {\topsep}                    
    {\topsep}                    
    {\normalfont}                   
    {}                           
    {\bfseries}                   
    {.}                          
    {.5em}                       
    {}  
\theoremstyle{mytheoremstyle}
\newcommand{\BlackBox}{\rule{1.5ex}{1.5ex}}  
\def\QED{~\rule[-1pt]{5pt}{5pt}\par\medskip}
\newenvironment{proof}{\par\noindent{\bf Proof\ }}{\hfill\BlackBox\\[2mm]}
\newtheorem{theorem}{Theorem}
\newtheorem{lemma}[theorem]{Lemma}
\newtheorem{proposition}[theorem]{Proposition}
\newtheorem{remark}[theorem]{Remark}
\newtheorem{definition}[theorem]{Definition}
\numberwithin{equation}{section}
\numberwithin{theorem}{section}
\begin{document}

\title{\huge Binary Embedding: Fundamental Limits and Fast Algorithm}

\author{
Xinyang Yi\\
{The University of Texas at Austin}\\
{yixy@utexas.edu}
\and
Constantine Caramanis\\
{The University of Texas at Austin}\\
{constantine@utexas.edu} 
\and
Eric Price\\
{The University of Texas at Austin}\\
{ecprice@cs.utexas.edu} }
\date{}

\maketitle

\begin{abstract}
  Binary embedding is a nonlinear dimension reduction methodology
  where high dimensional data are embedded into the Hamming cube while
  preserving the structure of the original space. Specifically, for an
  arbitrary $N$ distinct points in $\SSS^{p-1}$, our goal is to encode
  each point using $m$-dimensional binary strings such that we can
  reconstruct their geodesic distance up to $\delta$ uniform
  distortion. Existing binary embedding algorithms either lack
  theoretical guarantees or suffer from running time
  $O\big(mp\big)$. We make three contributions: (1) we establish a
  lower bound that shows any binary embedding oblivious to the set of
  points requires $m = \Omega(\frac{1}{\delta^2}\log{N})$ bits and a
  similar lower bound for non-oblivious embeddings into Hamming
  distance;
  \sout{(2) we propose a novel fast binary embedding algorithm with provably
  optimal bit complexity $m =
  O\big(\frac{1}{\delta^2}\log{N}\big)$ and near linear running time
  $O(p \log p)$ whenever $\log N \ll \delta \sqrt{p}$, with a slightly
  worse running time for larger $\log N$;} (3) we also provide an
  analytic result about embedding a general set of points $K \subseteq
  \SSS^{p-1}$ with even infinite size.  Our theoretical findings are
  supported through experiments on both synthetic and real data sets.

  \textbf{[Note: a previous version of this paper also included a
    claimed fast upper bound for certain parameter regimes.  The proof
    of this had an error, as pointed out in~\cite{dirksen2018fast};
    the same paper also presents a correct algorithm for the
    setting.]}
\end{abstract}

\section{Introduction}

Low distortion embeddings that transform high-dimensional points to low-dimensional space have played an important role in dealing with storage, information retrieval and machine learning problems for modern datasets. Perhaps one of the most famous results along these lines is the Johnson-Lindenstrauss (JL) lemma~\citet{johnson1984extensions}, which shows that $N$ points can be embedded into a $O\big(\delta^{-2}\log{N}\big)$-dimensional space while preserving pairwise Euclidean distance up to $\delta$-Lipschitz distortion. This $\delta^{-2}$ dependence has been shown to be information-theoretically optimal \citet{alon2003problems}. Significant work has focused on fast algorithms for computing the embeddings, e.g., \citep{ailon2006approximate, krahmer2011new, ailon2013almost, cheraghchi2013restricted, nelson2014new}. 

More recently, there has been a growing interest in designing binary codes for high dimensional points with low distortion, i.e., embeddings into the binary cube \citep{weiss2009spectral, raginsky2009locality,salakhutdinov2009semantic, liu2011hashing, gong2011iterative, yu2014circulant}. Compared to JL embedding, embedding into the binary cube (also called binary embedding) has two advantages in practice: (i)  As each data point is represented by a binary code, the disk size for storing the entire dataset is reduced considerably. (ii) Distance in binary cube is some function of the Hamming distance, which can be computed quickly using computationally efficient bit-wise operators. As a consequence, binary embedding can be applied to a large number of domains such as biology, finance and computer vision where the data are usually high dimensional. 

While most JL embeddings are linear maps, any binary embedding is fundamentally a nonlinear transformation. As we detail below, this nonlinearity poses significant new technical challenges for both upper and lower bounds. In particular, our understanding of the landscape is significantly less complete. To the best of our knowledge, lower bounds are not known; embedding algorithms for infinite sets have distortion-dependence $\delta$ significantly exceeding their finite-set counterparts; and perhaps most significantly, there are no fast (near linear-time) embedding algorithms with strong performance guarantees. As we explain below, this paper contributes to each of these three areas. First, we detail some recent work and state of the art results.

\noindent {\bf Recent Work}. A common approach pursued by several existing works, considers the natural extension of JL embedding techniques via one bit quantization of the projections:
\begin{equation}
	\b(\x) = \sign(\Ab \x), \label{eq:quant}
\end{equation}  
where $\x \in \RR^{p}$ is input data point, $\Ab \in \RR^{m \times p}$ is a projection matrix and $\b(\x)$ is the embedded binary code. In particular, \citet{jacques2011robust} shows when each entry of $\Ab$ is generated independently from $\cN(0,1)$, with $m > \frac{1}{\delta^2}\log{N}$ it with high probability achieves at most $\delta$  (additive) distortion for $N$ points. Work in \citet{plan2014dimension} extend these results to arbitrary sets $K \subseteq \SSS^{p-1}$ where $|K|$ can be infinite. They prove that the embedding with $\delta$-distortion can be obtained when $m \gtrsim w(K)^2/\delta^6$ where $w(K)$ is the {\em Gaussian Mean Width} of $K$. It is unknown whether the unusual $\delta^{-6}$ dependence is optimal or not. Despite provable sample complexity guarantees, one bit quantization of random projection as in \eqref{eq:quant}, suffers from $O\big(mp\big)$ running time for a single point. This quadratic dependence can result in a prohibitive computational cost for high-dimensional data. Analogously to the developments in ``fast'' JL embeddings, there are several algorithms proposed to overcome this computational issue. Work in \citet{gong2013learning} proposes a bilinear projection method. By setting $m = O(p)$, their method reduces the running time from $O(p^2)$ to $O(p^{1.5})$. More recently, work in \citet{yu2014circulant} introduces a circulant random projection algorithm that requires running time $O\big(p\log p\big)$. While these algorithms have reduced running time, as of yet they come without performance guarantees: to the best of our knowledge, the measurement complexities of the two algorithms are still unknown. Another line of work considers learning binary codes from data by solving certain optimization problems \citep{weiss2009spectral, salakhutdinov2009semantic, norouzi2012Hamming, yu2014circulant}. Unfortunately, there is no known provable bits complexity result for these algorithms. It is also worth noting that \citet{raginsky2009locality} provide a binary code design for preserving shift-invariant kernels. Their method suffers from the same quadratic computational issue compared with the fully random Gaussian projection method. 

Another related dimension reduction technique is  locality sensitive hashing (LSH) where the goal is to compute a discrete data structure such that similar points are mapped into the same bucket with high probability (see, e.g., \citet{andoni2006near}). The key difference is that LSH preserves short distances, but binary embedding preserves both short and far distances.  For points that are far apart, LSH only cares that the hashings are different while binary embedding cares how different they are.

\noindent {\bf Contributions of this paper.} In this paper, we address several unanswered problems about binary embedding. We provide lower bounds for both data-oblivious and data-aware embeddings; we provide a fast algorithm for binary embedding; and finally we consider the setting of infinite sets, and prove that in some of the most common cases we can improve the state-of-the-art sample complexity guarantees by a factor of $\delta^{-2}$:
\begin{enumerate}
\item We provide two lower bounds for binary embeddings.  The first
  shows that any method for embedding and for recovering a distance
  estimate from the embedded points that is independent of the data
  being embedded must use $\Omega(\frac{1}{\delta^2}\log{N})$ bits.
  This is based on a bound on the communication complexity of Hamming
  distance used by~\cite{jayram2013optimal} for a lower bound on the
  ``distributional'' JL embedding.  Separately, we give a lower bound
  for arbitrarily data-dependent methods that embed into (any function
  of) the Hamming distance, showing such algorithms require $m =
  \Omega(\frac{1}{\delta^2\log{(1/\delta)}}\log{N})$.  This bound is
  similar to~\citet*{alon2003problems} which gets the same result for
  JL, but the binary embedding requires a different construction.

\item \sout{We provide the first provable fast algorithm with optimal measurement complexity  $O\big(\frac{1}{\delta^2}\log{N}\big)$.}
  \textbf{[A previous version of this paper included an incorrect claimed result here.]}
  \iffalse The proposed algorithm has running time  $O\big(\frac{1}{\delta^2}\log{\frac{1}{\delta}}\log^2{N}\log{p}\log^3 \log N + p \log p\big)$ thus has almost linear time complexity when $\log N \lesssim \delta \sqrt{p}$. Our algorithm is based on two key novel ideas. First, our similarity is based on the median Hamming distance of sub-blocks of the binary code; second, our new embedding takes advantage of a {\em pair-wise independence argument} of Gaussian Toeplitz projection that could be of independent interest. }
\fi

\item For arbitrary set $K \subseteq \SSS^{p-1}$ and the fully random Gaussian projection algorithm, we prove that $m  = O (w(K^{+})^2/\delta^4)$ is sufficient to achieve $\delta$ uniform distortion. Here $K^{+}$ is an {\em expanded} set of $K$. Although in general $K \subseteq K^{+}$ and hence $w(K) \leq w(K^{+})$, for interesting $K$ such as sparse or low rank sets, one can show $w(K^{+}) = \Theta(w(K)) \ll p$. Therefore applying our theory to these sets results in an improved dependence on $\delta$ compared to a recent result in \citet{plan2014dimension}. See Section \ref{delta-uniform embedding} for a detailed discussion.
\end{enumerate}

\noindent {\bf Notation.} We use $[n]$ to denote natural number set $\{1,2,\ldots,n\}$. For natural numbers $a < b$, let $[a,b]$ denote the consecutive set $\{a,a+1,\ldots,b\}$. A vector in $\RR^{n}$ is denoted as $\x$ or equivalently $(x_1,x_2,\ldots,x_n)^{\top}$. We use $\x_{\cI}$ to denote the sub-vector of $\x$ with index set $\cI \subseteq [n]$. We denote entry-wise vector multiplication as $\x\odot\y = (x_1y_1,x_2y_2,\ldots,x_ny_n)^{\top}$. A matrix is typically denoted as $\Mb$. Term $(i,j)$ of $\Mb$ is denoted as $\Mb_{i,j}$. Row $i$ of $\Mb$ is denoted as $\Mb_i$.  An $n$-by-$n$ identity matrix is denoted as $\Ib_{n}$. For two random variables $X,Y$, we denote the statement that $X$ and $Y$ are independent as $X \bot Y$. For two binary strings $\a,\b \in \{0,1\}^{m}$, we use $d_{\cH}(\a,\b)$ to denote the normalized Hamming distance, i.e., $d_{\cH}(\a,\b) := \frac{1}{m}\sum_{i=1}^{m}\mathds{1}(a_i \ne b_i)$.

\section{Organization, Problem Setup and Preliminaries}

In this section, we state our problem formally, give some key definitions and present a simple (known) algorithm that sets the stage for the main results of this paper. The algorithm (Algorithm \ref{alg:urp}), discussed in detail below, is simply the one-bit quantization of a standard JL embedding. Its performance {\em on finite} sets is easy to analyze, and we state it in Proposition \ref{thm:urp} below. Three important questions remain unanswered: (i) Lower Bounds -- is the performance guaranteed by Proposition \ref{thm:urp} optimal? We answer this affirmatively in Section \ref{sec:lower bound}. (ii) Fast Embedding -- whereas Algorithm \ref{alg:urp} is quadratic (depending on the product $mp$), fast JL algorithms are nearly linear in $p$; does something similar exist for binary embedding? We develop a new algorithm in Section \ref{sec:fast algorithm} that addresses the complexity issue, while at the same time guaranteeing $\delta$-embedding with dimension scaling that matches our lower bound. Interestingly, a key aspect of our contribution is that we use a slightly modified similarity function, using the median of the normalized Hamming distance on sub-blocks. (iii) Infinite Sets -- recent work analyzing the setting of infinite sets $K \subseteq \mathbb{S}^{p-1}$ shows a dependence of $\delta^{-6}$ on the distortion. Is this optimal? We show in Section \ref{delta-uniform embedding} that in many settings this can be improved by a factor of $\delta^{-2}$. In Section \ref{sec:numerical results}, we provide numerical results. We give most proofs in Section \ref{sec:proofs}.

\subsection{Problem Setup}
Given a set of $p$-dimensional points, our goal is to find a transformation $f: \mathbb{R}^p \mapsto \{0,1\}^m$ such that the Hamming distance (or other related, easily computable metric) between two binary codes is close to their similarity in the original space. We consider points on the unit sphere $\mathbb{S}^{p-1}$ and use the normalized geodesic distance (occasionally, and somewhat misleadingly, called cosine similarity) as the input space similarity metric. For two points  $\x, \y \in \mathbb{R}^{p}$, we use $d(\x,\y)$ to denote the geodesic distance, defined as
\[
	d(\x,\y) := \frac{\angle(\x/\|\x\|_2,\y/\|\y\|_2)}{\pi},
\]
where $\angle(\cdot,\cdot)$ denotes the angle between two vectors. For $\x,\y \in \SSS^{p-1}$, the metric $d(\x,\y)$ is proportional to the length of the shortest path connecting $\x,\y$ on the sphere.
 
Given the success of JL embedding, a natural approach is to consider the one bit quantization of a random projection: 
\begin{equation} \label{framework}
	\b = \text{sign}(\Ab \x),
\end{equation}
where $\Ab$ is some random projection matrix. Given two points $\x, \y$ with embedding vectors $\b$, and $\c$, we have $b_i \ne c_i$ if and only if $\big\langle \Ab_i , \x \big\rangle \big\langle \Ab_i, \y\big\rangle < 0$.
The traditional metric in the embedded space has been the so-called normalized Hamming distance, which we done by $d_{\Ab}(\x,\y)$ and is defined as follows.
\begin{equation} \label{AHamming}
	d_{\Ab}(\x,\y) := \frac{1}{m}\sum_{i=1}^{m} \mathds{1}\bigg\{\text{sign}\big(\big\langle \Ab_i,\x \big\rangle\big) \ne \text{sign}\big(\big\langle \Ab_i,\y \big\rangle\big)\bigg\}.
\end{equation}

\begin{definition} ($\delta$-{uniform} Embedding) 
	Given a set $K \subseteq \mathbb{S}^{p-1}$ and projection matrix $\Ab \in \RR^{m\times p}$, we say the embedding $\b = \text{sign}(\Ab \x)$  provides a $\delta$-uniform embedding for points in $K$ if 
	\begin{equation} \label{uniform embedding}
		\big|d_{\Ab}(\x,\y) - d(\x,\y) \big| \leq \delta, \; \forall \;x, y \in K.
	\end{equation}
\end{definition}

Note that unlike for JL, we aim to control {\em additive} error instead of {\em relative} error. Due to the inherently limited resolution of binary embedding, controlling relative error would force the embedding dimension $m$ to scale inversely with the minimum distance of the original points, and in particular would be impossible for any infinite set.

\subsection{Uniform Random Projection} \label{sec:uniform random projection}
\begin{algorithm}[!htb]
	\caption{Uniform Random Projection}
	\label{alg:urp}
	\begin{algorithmic}[1]
		\INPUT  Finite number of points $K = \{\x_i\}_{i=1}^{|K|}$ where $K \subseteq \SSS^{p-1}$, embedding target dimension $m$.
		\STATE Construct  matrix $\Ab \in \RR^{m \times p}$ where each entry $\Ab_{i,j}$ is drawn independently from $\cN(0,1)$.
		\FOR{$i = 1, 2, \ldots, |K|$}
		\STATE $\b_i \leftarrow \sign(\Ab \x_i)$.
		\ENDFOR
		\OUTPUT $\{\b_i\}_{i=1}^{|K|}$
		\end{algorithmic}
\end{algorithm}

Algorithm \ref{alg:urp} presents \eqref{framework} formally, when $\Ab$ is an i.i.d.\ Gaussian random matrix, i.e., $\Ab_i \sim \cN(0,\Ib_p)$ for any $i \in [m]$. It is easy to observe that for two fixed points $\x,\y \in \SSS^{p-1}$ we have 
\begin{equation}
	\EE\bigg( \mathds{1}\bigg\{\sign\big(\big\langle \Ab_i, \x\big\rangle\big) \ne \sign\big(\big\langle \Ab_i, \y\big\rangle\big)\bigg\} \bigg) = d(\x,\y),\; \forall \;i \in [m].
\end{equation}
The above equality has a geometric explanation: each $\Ab_i$ actually represents a uniformly distributed random hyperplane in $\RR^{p}$. Then $\sign\big(\big\langle \Ab_i, \x\big\rangle\big) \ne \sign\big(\big\langle \Ab_i, \y\big\rangle\big)$ holds if and only if hyperplane $\Ab_i$ intersects the arc between $\x$ and $\y$. In fact, $d_{\Ab}(\x,\y)$ is equal to the fraction of such hyperplanes. Under such uniform tessellation, the probability with which the aforementioned event occurs is $d(\x,\y)$.  Applying Hoeffding's inequality and probabilistic union bound over $N^2$ pairs of points, we have the following straightforward guarantee. 

\begin{proposition} \label{thm:urp}
	Given a set $K \subseteq \SSS^{p-1}$ with finite size $|K|$, consider Algorithm \ref{alg:urp} with $m \geq c(1/\delta^2)\log{|K|}$. Then with probability at least $1 - 2\exp(-\delta^2m)$, we have 
	\[
		\big| d_{\Ab}(\x,\y) - d(\x,\y) \big| \leq \delta, \; \forall \; \x,\y \in K.
	\]
Here $c$ is some absolute constant.
\end{proposition}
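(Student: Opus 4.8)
The plan is the standard route: pointwise concentration via Hoeffding, then a union bound over pairs, then a choice of the absolute constant $c$. Fix an ordered pair $\x,\y \in K$ and, for each row $i \in [m]$, define the Bernoulli random variable
\[
Z_i := \mathbbm{1}\big\{\sign(\langle \Ab_i,\x\rangle) \ne \sign(\langle \Ab_i,\y\rangle)\big\},
\]
so that, by definition, $d_{\Ab}(\x,\y) = \frac{1}{m}\sum_{i=1}^{m} Z_i$. Because the rows $\Ab_i$ are i.i.d.\ $\cN(0,\Ib_p)$, the variables $Z_1,\dots,Z_m$ are i.i.d., and by the expectation identity recorded just before the statement (the uniform-tessellation fact: a uniformly random hyperplane separates $\x$ and $\y$ with probability exactly $d(\x,\y)=\angle(\x,\y)/\pi$) each satisfies $\EE Z_i = d(\x,\y)$.

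Since each $Z_i \in \{0,1\}$, Hoeffding's inequality gives, for this pair,
\[
\PP\Big( \big| d_{\Ab}(\x,\y) - d(\x,\y) \big| \ge \delta \Big) \;\le\; 2\exp(-2\delta^2 m).
\]
I then take a union bound over the at most $|K|^2$ ordered pairs drawn from $K$ (diagonal pairs contribute $0$ trivially), obtaining
\[
\PP\Big( \exists\, \x,\y \in K:\ \big| d_{\Ab}(\x,\y) - d(\x,\y) \big| \ge \delta \Big) \;\le\; 2\,|K|^2 \exp(-2\delta^2 m) = 2\exp\big( 2\log|K| - 2\delta^2 m \big).
\]
Finally, taking $c$ to be a sufficiently large absolute constant — e.g.\ $c = 2$ suffices when $|K|\ge 2$, since then $m \ge (2/\delta^2)\log|K|$ forces $2\delta^2 m \ge \delta^2 m + 2\log|K|$ — the right-hand side is bounded by $2\exp(-\delta^2 m)$, which is exactly the claimed failure probability. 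The degenerate case $|K| = 1$ is immediate since the only pair is $(\x,\x)$ and both distances vanish.

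I do not expect a genuine obstacle here: the argument is short, and the one substantive input is the mean computation $\EE Z_i = d(\x,\y)$, which is precisely what distinguishes binary embedding from a generic quantized projection and is already established in the display preceding the proposition. Looking ahead, it is worth noting that Hoeffding is not wasteful for this purpose: the resulting $m = \Theta(\delta^{-2}\log|K|)$ already matches the $\Omega(\delta^{-2}\log N)$ lower bound established later in the paper, so there is nothing to be gained by replacing the tail bound with a variance-adapted Bernstein/Chernoff estimate using $\Var(Z_i) = d(\x,\y)\big(1 - d(\x,\y)\big)$ for this worst-case, additive-distortion guarantee.
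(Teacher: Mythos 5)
Your proof is correct and is exactly the argument the paper intends: Hoeffding's inequality applied to the i.i.d.\ indicator variables with mean $d(\x,\y)$ (the uniform-tessellation fact stated just before the proposition), followed by a union bound over the at most $|K|^2$ pairs and a choice of the absolute constant, which the paper itself describes and then omits as standard. The constant bookkeeping ($c=2$ giving $2\delta^2 m \ge \delta^2 m + 2\log|K|$) checks out, so nothing further is needed.
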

\begin{proof}
	The proof idea is standard and follows from the above; we omit the details.
\end{proof}

\section{Main Results}

We now present our main results on lower bounds, on fast binary embedding, and finally, on a general result for infinite sets.

\subsection{Lower Bounds}\label{sec:lower bound}

We offer two different lower bounds.  The first shows that any
embedding technique that is oblivious to the input points must use
$\Omega(\frac{1}{\delta^2} \log N)$ bits, regardless of what method is
used to estimate geodesic distance from the embeddings.  This shows
that uniform random projection and our fast binary embedding achieve
optimal bit complexity (up to constants).  The bound follows from
results by~\cite{jayram2013optimal} on the communication complexity of
Hamming distance.

\begin{theorem} \label{thm:oblivious lower bound}
	Consider any distribution on embedding functions $f: \mathbb{S}^{p-1} \to \{0,
	1\}^m$ and reconstruction algorithms $g: \{0, 1\}^m \times \{0, 1\}^m
	\to \RR$ such that for any $\x_1, \dotsc, \x_N \in \SSS^{p-1}$ we have
	\[
		\big|g(f(\x_i), f(\x_j)) - d(\x_i, \x_j)\big| \leq \delta
	\]
	for all $i, j \in [N]$ with probability $1 - \eps$.  Then $m =
	\Omega(\frac{1}{\delta^2}\log (N/\eps))$.
\end{theorem}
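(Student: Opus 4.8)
The plan is to reduce the existence of a data-oblivious $\delta$-uniform embedding into $m$ bits to a communication-complexity lower bound, using a gadget that realizes Hamming distances as geodesic distances on the sphere, and then to invoke the bound of~\cite{jayram2013optimal} on the communication complexity of Hamming distance.

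First, the gadget. For a parameter $n$ to be chosen, map $\{0,1\}^n$ into the subsphere $\SSS^{n-1}\subseteq\SSS^{p-1}$ (assuming, as we may, that $p\ge n$) by $X\mapsto\hat X:=\tfrac{1}{\sqrt n}\big((-1)^{X_1},\dots,(-1)^{X_n}\big)$. Then $\langle\hat X,\hat Y\rangle = 1 - 2\Delta_H(X,Y)/n$, where $\Delta_H$ is the unnormalized Hamming distance, so $d(\hat X,\hat Y)=\tfrac1\pi\arccos(1-2\Delta_H(X,Y)/n)$. Since $t\mapsto\arccos(1-2t)$ is bi-Lipschitz on $[1/4,3/4]$, whenever $\Delta_H(X,Y)/n\in[1/4,3/4]$ the geodesic distance determines the normalized Hamming distance up to a bounded bi-Lipschitz reparametrization; in particular, knowing $d(\hat X,\hat Y)$ to additive error $\delta$ determines $\Delta_H(X,Y)$ to additive error $\Theta(\delta n)$. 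Taking $n=\Theta(1/\delta^2)$ makes this resolution $\Theta(\sqrt n)$, which is exactly the gap in the Gap-Hamming problem; moreover pairwise Hamming distances of, say, uniformly random strings lie in $[n/4,3n/4]$ with overwhelming probability, so random strings give legal gadget inputs.

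Second, the reduction. Suppose toward a contradiction that $m$ is below the claimed bound. By averaging it suffices to produce a single distribution $\cD$ over $N$-tuples of sphere points such that every \emph{deterministic} pair $(f,g)$ with $f:\SSS^{p-1}\to\{0,1\}^m$ fails---misestimates some pairwise geodesic distance by more than $\delta$---with probability exceeding $\eps$ over $\cD$: indeed, for any distribution over $(f,g)$ the worst-case failure probability over point sets is at least the $\cD$-average failure probability, and by Fubini the latter equals the $\cD$-average of the deterministic failure probabilities. We take $\cD$ supported on gadget points $\hat X$, $X\in\{0,1\}^n$ with $n=\Theta(1/\delta^2)$, drawn from the hard input distribution for the $N$-point Hamming-distance estimation problem (equivalently, an augmented/direct-sum Gap-Hamming problem) underlying the JL lower bound of~\cite{jayram2013optimal}. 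Because $f$ is a fixed public function, a deterministic $(f,g)$ acting on these points \emph{is} a one-way protocol for that problem: a party holding a string transmits its $m$-bit image under $f$, and the receiving party, knowing $f$ and its own strings, applies $g$ to recover the relevant normalized Hamming distances to additive $O(\delta)$. Plugging in the $\Omega(\tfrac{1}{\delta^2}\log(N/\eps))$ communication lower bound of~\cite{jayram2013optimal} and reading it back through the reduction contradicts our assumption on $m$, which proves the theorem.

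The real content is the communication lower bound, which we import rather than reprove. The gadget together with the classical $\Omega(n)$ bound for Gap-Hamming only yields $m=\Omega(1/\delta^2)$, with neither the $\log N$ nor the $\log(1/\eps)$ factor. The $\log N$ factor comes from the direct-sum/augmented structure distributed across the $N$ points---so that a single $O(m)$-bit message must be simultaneously informative about many implicitly encoded Gap-Hamming instances---and the $\log(1/\eps)$ factor (as opposed to the trivial $O(\log(1/\eps))$ savings, or the $\Omega(1)$-error $\Omega(1/\delta^2)$ bound) requires the sharper information-complexity analysis of~\cite{jayram2013optimal}. By contrast, the two ingredients we supply here---the spherical realization of Hamming distance and the averaging step---are routine; the only things to verify are that the $\Theta(\sqrt n)$-gap regime is the one that matters and that the constant-factor slack between $\delta$ and the embedding accuracy is harmless.
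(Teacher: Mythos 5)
Your gadget (bits $\mapsto$ $\pm 1/\sqrt{n}$ vectors, $d(\hat X,\hat Y)=\tfrac1\pi\arccos(1-2\Delta_H/n)$, Lipschitz transfer of additive error) is exactly the paper's first step, and the Yao-style averaging to deterministic $(f,g)$ against a hard distribution is fine. The genuine gap is in where you get the $\log(N/\eps)$ factor: you import an ``$N$-point (direct-sum/augmented) Hamming-distance estimation'' one-way communication lower bound of $\Omega(\frac{1}{\delta^2}\log(N/\eps))$ and attribute it to \cite{jayram2013optimal}, but that reference does not supply such an $N$-point statement. What it supplies (via augmented indexing on large domains, not plain Gap-Hamming) is a \emph{single-instance} additive Hamming estimation bound with subconstant error: any one-way protocol with failure probability $\eps'$ needs $\Omega(\frac{1}{\delta^2}\log(1/\eps'))$ bits. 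The missing idea --- and the actual content of the paper's proof --- is the step that converts the simultaneous guarantee on $N$ points into a \emph{low-error} single-instance protocol: plant $N/2$ independent copies of the two-point hard instance, use the obliviousness of $(f,g)$ together with the independence of the copies to argue that the per-instance failure probability must be $O(\eps/N)$, and then invoke the single-instance bound with $\eps'=O(\eps/N)$, which yields $\Omega(\frac{1}{\delta^2}\log(N/\eps))$. Your proposal skips this reduction entirely by assuming the conclusion is already a known communication theorem, so as written the argument only establishes $m=\Omega(\frac{1}{\delta^2}\log(1/\eps))$ (and, with plain Gap-Hamming, only $m=\Omega(1/\delta^2)$), exactly the shortfall you yourself flag.

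A secondary, related inconsistency: you fix the gadget length at $n=\Theta(1/\delta^2)$ so that additive error $\delta$ matches the $\sqrt{n}$ gap of Gap-Hamming. But a length-$n$ instance can carry only $O(n)$ hard bits, so with that choice no reduction can certify more than $\Omega(1/\delta^2)$ communication per instance; to realize the $\log(1/\eps')$ factor the hard strings must have length $n'=\Theta(\frac{1}{\delta^2}\log(1/\eps'))$ (the paper's Lemma on Hamming hardness takes exactly this form and handles $n>n'$ by coordinate duplication and zero-padding). So beyond citing the right single-instance theorem, you would need to let $n$ scale with $\log(N/\eps)$ and then carry out the independent-copies amplification; note that this amplification step is itself delicate, since the $N/2$ success events share the randomness of $(f,g)$ and are only conditionally independent given it, which is precisely why the obliviousness of the embedding must be invoked there.
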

\begin{proof}
	See Section \ref{proof:thm:oblivious lower bound} for detailed proof.
\end{proof}

One could imagine, however, that an embedding could use knowledge of
the input point set to embed any specific set of points into a
lower-dimensional space than is possible with an oblivious algorithm.
In the Johnson-Lindenstrauss setting, \citet*{alon2003problems} showed
that this is not possible beyond (possibly) a $\log (1/\delta)$
factor.  We show the analogous result for binary embeddings.  Relative
to Theorem~\ref{thm:oblivious lower bound}, our second lower bound
works for data-dependent embedding functions but loses a $\log
(1/\delta)$ and requires the reconstruction function to depend only on
the Hamming distance between the two strings.  This restriction is
natural because an unrestricted data-dependent reconstruction function
could simply encode the answers and avoid any dependence on $\delta$.

With the scheme given in (\ref{framework}), choosing $\Ab$ as a fully
random Gaussian matrix yields $d_{\Ab}(\x,\y) \approx
d(\x,\y)$. However, an arbitrary binary embedding algorithm may not
yield a linear functional relationship between Hamming distance and
geodesic distance. Thus for this lower bound, we allow the design of an
algorithm with arbitrary link function $\cL$.

\begin{definition} \label{def:gbep} (Data-dependent binary embedding problem) \\
	Let $\cL: [0,1] \rightarrow [0,1]$ be a monotonic and continuous function. Given a set of points $\x_1,\x_2,...,\x_N \in \SSS^{p-1}$, we say a binary embedding mapping $f$ solves the binary embedding problem in terms of link function $\cL$, if 
	\begin{equation} \label{general_equation}
		\big|d_{\cH}(f(\x_i),f(\x_j)) - \cL\big(d(\x_i,\x_j)\big)\big| \leq \delta, \; \forall \;i ,j \in [N].
	\end{equation}
\end{definition}

Although the choice of $\cL$ is flexible, note that for the same point, we always have $d_{\cH}(f(\x_i),f(\x_i)) = d(\x_i,\x_i) = 0$, thus (\ref{general_equation}) implies $\cL(0) < \delta$. We can just let $\cL(0) = 0$. In particular, we let $\cL_{\rm max} = \cL(1)$. We have the following lower bound:

\begin{theorem} \label{thm:lower bound}
	There exist $2N$ points $\x_1, \x_2,...,\x_{2N} \in \mathbb{S}^{N-1}$ such that for any binary embedding algorithm $f$ on $\{\x_i\}_{i=1}^{2N}$, if it solves the data-dependent binary embedding problem defined in \ref{def:gbep} in terms of link function $\cL$ and any $ \delta \in (0, \frac{1}{16\sqrt{e}}\cL_{\rm max})$, it must satisfy
	\begin{equation} \label{lower bound}
		m \geq  \frac{1}{128e} \bigg(\frac{\cL_{\rm max}}{\delta}\bigg)^2\frac{\log{N} }{\log{\frac{\cL_{\rm max}}{2\delta}}}.
	\end{equation}
\end{theorem}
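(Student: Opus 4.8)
The plan is to adapt the rank/Hadamard-power argument of \citet{alon2003problems} (their Johnson--Lindenstrauss lower bound) with two modifications specific to the binary setting: the hard instance is built from antipodal pairs of orthonormal vectors, and one passes to \emph{difference vectors} in order to annihilate the adversarially chosen (and possibly tiny) quantity $\cL(1/2)$, leaving a near-orthogonal system whose Gram matrix nevertheless has operator scale $\Theta(\cL_{\max})$. Concretely, take $\x_1,\dots,\x_N$ to be the standard orthonormal basis of $\RR^N$ and let the $2N$ points be $\x_1,\dots,\x_N,-\x_1,\dots,-\x_N\in\SSS^{N-1}$. The geodesic distances here are completely rigid: $d(\x_i,-\x_i)=1$ for every $i$, and $d(\cdot,\cdot)=1/2$ for every other unordered pair among the $2N$ points.

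Suppose $f$ together with some link function $\cL$ solves the data-dependent problem with parameter $\delta$. Encode bits into signs via $\phi(b)=2b-1$ and set $u_i=\phi(f(\x_i))/\sqrt m$, $u_i'=\phi(f(-\x_i))/\sqrt m$; these are unit vectors in $\RR^m$, and the identity $\langle\phi(a),\phi(a')\rangle=m(1-2d_{\cH}(a,a'))$ turns every Hamming constraint into an inner-product constraint such as $\langle u_i,u_j\rangle=1-2d_{\cH}(f(\x_i),f(\x_j))$. The crucial step is to work with $z_i:=u_i-u_i'\in\RR^m$. For $i\ne j$, expanding $\langle z_i,z_j\rangle$ produces the four inner products $\langle u_i,u_j\rangle,\langle u_i,u_j'\rangle,\langle u_i',u_j\rangle,\langle u_i',u_j'\rangle$ with signs $+,-,-,+$; all four underlying pairs of points lie at geodesic distance $1/2$, so each of these inner products equals $1-2\cL(1/2)$ up to an additive error $\le 2\delta$, and the signed combination cancels $1-2\cL(1/2)$ exactly, leaving $|\langle z_i,z_j\rangle|\le 8\delta$. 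On the diagonal, $\|z_i\|^2=2-2\langle u_i,u_i'\rangle=4\,d_{\cH}(f(\x_i),f(-\x_i))\in[4(\cL_{\max}-\delta),4(\cL_{\max}+\delta)]$, using $d(\x_i,-\x_i)=1$ and $\cL(1)=\cL_{\max}$. Thus the Gram matrix $H=(\langle z_i,z_j\rangle)_{i,j\in[N]}$ is PSD of rank at most $m$, has diagonal entries of size $\Theta(\cL_{\max})$ (positive since $\delta<\cL_{\max}$), and off-diagonal entries bounded by $8\delta$.

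Conjugating by $D^{-1/2}$ with $D=\diag(H)$ gives a matrix $\widehat H$ of the same rank $\le m$, with unit diagonal and off-diagonal entries at most $\epsilon:=8\delta/(4\cL_{\max}-4\delta)=2\delta/(\cL_{\max}-\delta)$, which under $\delta<\tfrac{1}{16\sqrt e}\cL_{\max}$ is of order $\delta/\cL_{\max}$ and comfortably below $1$. Now I would invoke the rank lemma: a symmetric real matrix of order $N$ with unit diagonal and off-diagonal entries bounded by $\epsilon$ has rank $\gtrsim\frac{\log N}{\epsilon^2\log(1/\epsilon)}$. I would include its short proof for self-containedness: writing $r=\rank(\widehat H)$, one has $\rank(\widehat H^{\odot k})\le\binom{r+k-1}{k}$ while $\widehat H^{\odot k}$ still has unit diagonal and off-diagonal entries $\le\epsilon^k$; choosing $k=\lceil\tfrac{\log N}{2\log(1/\epsilon)}\rceil$ makes the off-diagonal Frobenius mass at most the diagonal mass, so a Cauchy--Schwarz/trace bound forces $\rank(\widehat H^{\odot k})\ge N/2$, and unwinding $\binom{r+k-1}{k}\ge N/2$ through $\binom{r+k-1}{k}\le(e(r+k-1)/k)^k$ gives $r\gtrsim k/\epsilon^2\gtrsim\frac{\log N}{\epsilon^2\log(1/\epsilon)}$. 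Substituting $\epsilon\asymp\delta/\cL_{\max}$ and tracking constants (this is where the $\sqrt e$ in the admissible range of $\delta$ and the $2\delta$ inside the logarithm appear) yields $m\ge\frac{1}{128e}\big(\tfrac{\cL_{\max}}{\delta}\big)^2\frac{\log N}{\log(\cL_{\max}/2\delta)}$; in the degenerate regime $\epsilon<N^{-1/2}$ one just uses $k=1$ directly and obtains $m\ge N/2$, which is at least as large.

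The main obstacle is conceptual rather than computational: a naive use of this point configuration only constrains the embedding through $\cL(1/2)$, a value the adversary may drive arbitrarily close to $0$, so the essential idea is to find the difference-vector combination that cancels $\cL(1/2)$ while the antipodal diagonal retains the scale $\cL_{\max}$. The remaining work — reproducing Alon's rank lemma with explicit constants so the final bound has exactly the stated form, and dispatching the small-$N$/large-$\epsilon$ corner case — is routine.
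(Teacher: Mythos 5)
Your proposal is correct and follows essentially the same route as the paper: the same hard instance of antipodal basis vectors, the same difference-vector Gram matrix (your $\langle z_i,z_j\rangle$ is exactly the paper's $(\Bb-\Cb)(\Bb-\Cb)^{\top}$ up to scaling) cancelling $\cL(1/2)$ while the diagonal retains scale $\cL_{\max}$, followed by the Alon-style Hadamard-power rank argument that the paper packages as Lemmas \ref{weak_rank_bound} and \ref{boost_lower_bound}. The only differences are cosmetic (normalizing by the true diagonal rather than by $4m(\cL(1)+\delta)$, and folding the rank lemma into one statement), so the argument matches the paper's proof.
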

\begin{proof}
	See Section \ref{proof:thm:lower bound} for detailed proof.
\end{proof}

\begin{remark} 
	We make two remarks for the above result. (1) When $\cL_{\rm max}$ is some constant, our result implies that for general $N$ points, any binary embedding algorithm (even data-dependent ) must have $\Omega(\frac{1}{\delta^2\log{\frac{1}{\delta}}}\log N)$ number of measurements. This is analogous to Alon's lower bound in the JL setting. It is worth highlighting two differences: (i) The JL setting considers the same metric (Euclidean distance) for both the input and the embedded spaces. In binary embedding, however, we are interested in showing the relationship between Hamming distance and geodesic distance. (ii) Our lower bound is applicable to a broader class of binary embedding algorithms as it involves arbitrary, even data-dependent, link function $\cL$. Such an extension is not considered in the lower bound of JL. (2) The stated lower bound only depends on $\cL_{\rm max}$ and does not depend on any curvature information of $\cL$. The constraint $\cL_{\rm max} > 16\sqrt{e}\delta$ is critical for our lower bound to hold, but some such restriction is necessary because for $\cL_{\rm max} < \delta$, we are able to embed all points into just one bit. In this case $d_{\cH}(f(\x_i),f(\x_j)) = 0$ for all pairs and condition (\ref{general_equation}) would hold trivially.
\end{remark}

\subsection{Fast Binary Embedding} \label{sec:fast algorithm}

\textbf{[Note: this section contains an incorrect proof of a result.
  We leave the section here because some of the intermediate lemmas
  are used by \cite{dirksen2018fast}, which pointed out the error in
  this proof and provided a correct algorithm.  Incorrect
  Theorem/Lemma statements are noted as such.]}

In this section, we present a novel fast binary embedding algorithm. We then establish its theoretical guarantees. There are two key ideas that we leverage: (i) instead of normalized Hamming distance, we use a related metric, the median of the normalized Hamming distance applied to sub-blocks; and (ii) we show a key pair-wise independence lemma for partial Gaussian Toeplitz projection, that allows us to use a concentration bound that then implies nearness in the median-metric we use.

\subsubsection{Method}

Our algorithm builds on sub-sampled Walsh-Hadamard matrix and partial Gaussian Toeplitz  matrices with random column flips. In particular, an $m$-by-$p$ partial Walsh-Hadamard matrix has the form
\begin{equation} \label{fastJL}
	\Phib :=  \Pb\cdot\Hb\cdot\Db.
\end{equation}
The above construction has three components. We characterize each term as follows:
\begin{itemize}
	\item Term $\Db$ is a $p$-by-$p$ diagonal matrix with diagonal terms $\{\zeta_i\}_{i=1}^{p}$ that are drawn from i.i.d.\ Rademacher sequence, i.e, for any $i \in [p]$, $\Pr(\zeta_i = 1) = \Pr(\zeta_i = -1) = 1/2$.
	\item Term $\Hb$ is a $p$-by-$p$ scaled Walsh-Hadamard matrix such that $\Hb^{\top}\Hb = \Ib_{p}$.
	\item Term $\Pb$ is an $m$-by-$p$ sparse matrix where one entry of each row is set to be $1$ while the rest are $0$. The nonzero coordinate of each row is drawn independently from uniform distribution. In fact, the role of $\Pb$ is to randomly select $p$ rows of $\Hb\cdot\Db$. 
\end{itemize}
An $m$-by-$n$ partial Gaussian Toeplitz matrix has the form 
\begin{equation} \label{partialGaussianRandom}
	\Psib :=  \Pb\cdot\Tb\cdot\Db.
\end{equation}
We introduce each term as follows:
\begin{itemize}
	\item Term $\Db$ a is $n$-by-$n$ diagonal matrix with diagonal terms $\{\zeta_i\}_{i=1}^{n}$ that are drawn from i.i.d.\ Rademacher sequence.
	\item Term $\Tb$ is a $n$-by-$n$ Toeplitz matrix constructed from $(2n-1)$-dimensional vector $\g$ such that $\Tb_{i,j} = g_{i-j+n}$ for any $i,j \in [n]$. In particular, $\g$ is drawn from $\cN(0,\Ib_{2n-1})$.
	\item Term $\Pb$ is an $m$-by-$n$ sparse matrix where $\Pb_{i} = \e_i^{\top}$ for any $i \in [m]$. Equivalently, we use $\Pb$ to select the first $m$ rows of $\Tb \Db$. It's worth to note we actually only need to select any distinct $m$ rows. 
\end{itemize}

With the above constructions in hand, we present our fast algorithm in Algorithm \ref{alg:fbe}. At a high level, Algorithm \ref{alg:fbe} consists of two parts: First, we apply column flipped partial Hadamard transform to convert $p$-dimensional point into $n$-dimensional intermediate point. Second, we use $B$ independent $(m/B)$-by-$n$ partial Gaussian Toeplitz matrices and sign operator to map an intermediate point into $B$ blocks of binary codes.  In terms of similarity computation for the embedded codes, we use the median of each block's normalized Hamming distance. In detail, for $\b,\c \in \{0,1\}^{m}$, $B$-wise normalized Hamming distance is defined as 
\begin{equation} \label{B-wise Hamming}
	d_{\cH}(\b,\c; B) := \median \bigg(  \bigg\{ d_{\cH}\big(\b_{T_i}, \c_{T_i}\big) \bigg\}_{i=0}^{B-1}\bigg)
\end{equation}
where $T_i = [i+1,i+m/B]$.

It is worth noting that our first step is one construction of fast JL
transform.  In fact any fast JL transform would work for our
construction, but we choose a standard one with real value: based on
\citet{rudelson2008sparse,cheraghchi2013restricted, krahmer2011new},
it is known that with $m = O\big(\epsilon^{-2}\log{N}\log p\log^3(\log
N)\big)$ measurements, a subsampled Hadamard matrix with column flips
becomes an $\epsilon$-JL matrix for $N$ points.

The second part of our algorithm follows framework (\ref{framework}). By choosing a Gaussian random vector in each row of $\Psi$, from our previous discussion in Section \ref{sec:uniform random projection}, the probability that such a hyperplane intersects the arc between two points is equal to their geodesic distance. Compared to a fully random Gaussian matrix, as used in Algorithm \ref{alg:urp}, the key difference is that the hyperplanes represented by rows of $\Psi$ are not independent to each other; this imposes the main analytical challenge.

\begin{algorithm}[!htb]
	\caption{Fast Binary Embedding}
	\label{alg:fbe}
	\begin{algorithmic}[1]
		\INPUT  Finite number of points $\{\x_i\}_{i=1}^{N}$ where each point $\x_i \in \SSS^{p-1}$, embedded dimension $m$, intermediate dimension $n$, number of blocks $B$.
		\STATE Draw a $n$-by-$p$ sub-sampled Walsh-Hadamard matrix $\Phib$ according to \eqref{fastJL}. Draw $B$ independent partial Gaussian Toeplitz matrices $\big\{\Psib^{(j)}\big\}_{j=1}^{B}$ with size $(m/B)$-by-$n$ according to \eqref{partialGaussianRandom}.
		\STATE \COMMENT{{\em Part I: Fast JL}}
		\FOR{$i = 1, 2, \ldots, N$}
		\STATE $\y_i \leftarrow \Phib \cdot \x_i$.
		\ENDFOR
		\STATE \COMMENT{{\em Part II: Partial Gaussian Toeplitz Projection}}
		\FOR{$i = 1, 2, \ldots, N$}
		\FOR{$j = 1, 2, \ldots, B$}
		\STATE $\c_j \leftarrow \sign\big(\Psib^{(j)} \cdot \y_i\big)$.
		\ENDFOR
		\STATE $\b_{i} \leftarrow [\c_{1};\c_{2};\ldots;\c_{B} ]$
		\ENDFOR
		\OUTPUT $\{\b_i\}_{i=1}^{N}$
	\end{algorithmic}
\end{algorithm}
\subsubsection{Analysis}
We give the analysis for Algorithm \ref{alg:fbe}. We first review a well known result about fast JL transform.
\begin{lemma} \label{lem:fastJL}
	Consider the column flipped partial Hadamard matrix defined in (\ref{fastJL}) with size $m$-by-$p$. For $N$ points $\x_1,\x_2,...,\x_N \in \SSS^{p-1}$, let $\y_i = \sqrt{\frac{p}{m}}\Phi(\bzeta)\cdot\x_i, \; \forall \; i \in [N]$. For some absolute constant $c$, suppose $m \geq c\delta^{-2}\log{N}\log{p}\log^3(\log N)$, then with probability at least $0.99$, we have that for any $i,j \in [N]$
	\begin{equation} \label{c1}
		\big| \|\y_i - \y_j\|_2 - \|\x_i - \x_j\|_2 \big| \leq  \delta\|\x_i - \x_j\|_2,
	\end{equation}
	and for any $i \in [N]$
	\begin{equation} \label{c2}
		\big| \|\y_i\|_2  - 1 \big| \leq \delta.
	\end{equation}
\end{lemma}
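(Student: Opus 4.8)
Since the statement is the standard ``fast Johnson--Lindenstrauss'' guarantee, the plan is not to reprove it from scratch but to assemble it from two well-known ingredients: (i) the restricted isometry property (RIP) of a subsampled (scaled) Walsh--Hadamard matrix, and (ii) the Krahmer--Ward principle that any RIP matrix, when post-composed with random column sign flips, is a JL embedding. Concretely, set the failure budget $\rho = 0.01$, the sparsity level $k = \lceil C\log(N^2/\rho)\rceil = O(\log N)$, and the target RIP constant $\eta = \delta/4$. By the RIP bounds for subsampled Hadamard matrices from \citet{rudelson2008sparse,cheraghchi2013restricted,krahmer2011new}, the matrix $\sqrt{p/m}\,\Pb\Hb$ has the RIP of order $k$ with constant at most $\eta$ once $m \gtrsim \eta^{-2}\,k\,\log^{2}k\,\log p$ (the exact power of $\log\log$ varies across references); substituting $k = O(\log N)$ reproduces the stated requirement $m \geq c\,\delta^{-2}\log N \log p \log^{3}(\log N)$.

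Next I would invoke Krahmer--Ward: because $\Db$ flips the columns of $\Pb\Hb$ with i.i.d.\ Rademacher signs, for every \emph{fixed} vector $\bm v \in \RR^{p}$ we have, with probability at least $1 - \rho/N^{2}$ over $\bzeta$,
\[
  \Big| \big\|\sqrt{p/m}\,\Phi(\bzeta)\,\bm v\big\|_2^{2} - \|\bm v\|_2^{2} \Big| \leq \delta\,\|\bm v\|_2^{2},
\]
where the $\log(N^{2}/\rho)$ dependence of the failure probability is precisely what the choice of $k$ above accommodates. A union bound over the $N$ points $\x_i$ and the $\binom{N}{2}$ difference vectors $\x_i - \x_j$ then makes this hold simultaneously for all of them with probability at least $1-\rho = 0.99$, taking $\y_i = \sqrt{p/m}\,\Phi(\bzeta)\,\x_i$.

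Finally I would convert the squared-norm guarantee to the stated form. For \eqref{c2}, since $\|\x_i\|_2 = 1$ we get $\|\y_i\|_2 \in [\sqrt{1-\delta},\sqrt{1+\delta}] \subseteq [1-\delta,\,1+\delta]$, i.e.\ $\big|\|\y_i\|_2 - 1\big| \leq \delta$. For \eqref{c1}, $\|\y_i - \y_j\|_2 \in [\sqrt{1-\delta},\sqrt{1+\delta}]\cdot\|\x_i-\x_j\|_2$, and using $\sqrt{1-\delta}\geq 1-\delta$ and $\sqrt{1+\delta}\leq 1+\delta$ for $\delta\in(0,1)$ yields $\big|\|\y_i-\y_j\|_2 - \|\x_i-\x_j\|_2\big| \leq \delta\|\x_i-\x_j\|_2$ (a harmless constant rescaling of $\delta$ up front absorbs any slack). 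The only real obstacle is bookkeeping rather than mathematics: one must pick the RIP reference whose polylogarithmic overhead matches the claimed $\log^{3}(\log N)$ factor and ensure the Krahmer--Ward sparsity requirement $k = \Omega(\log(N^2/\rho))$ is consistent with the size of the union bound; no new argument beyond these cited results is needed, which is why the lemma is stated as well known.
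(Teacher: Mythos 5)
Your proposal is correct and is essentially identical to the paper's proof, which is a one-line citation combining Theorem 14 of \citet{cheraghchi2013restricted} (RIP of the subsampled Hadamard matrix) with Theorem 3.1 of \citet{krahmer2011new} (RIP plus Rademacher column flips implies JL for a finite set); you simply spell out the bookkeeping of applying this to the $N$ points and their pairwise differences and converting the squared-norm distortion to \eqref{c1}--\eqref{c2}. The only cosmetic discrepancy is the polylog factor you quote ($\log^{2}k$ versus the paper's $\log^{3}(\log N)$), which you already flag as reference-dependent and which does not affect the argument.
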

\begin{proof}
	It can be proved by combining Theorem 14 in \citet{cheraghchi2013restricted} and Theorem 3.1 in \citet{krahmer2011new}.
\end{proof}
The above result suggests that the first part of our algorithm reduces the dimension while preserving well the Euclidean distance of each pair. Under this condition, all the pairwise geodesic distances are also well preserved as confirmed by the following result.
\begin{lemma} \label{lem:followFastJL}
	Consider the set of embedded points $\{\y_i\}_{i=1}^N$ defined in Lemma \ref{lem:fastJL}. Suppose conditions (\ref{c1})-(\ref{c2}) hold with $\delta > 0$. Then for any $i,j \in [N]$,
	\begin{equation} \label{bound1}
		\big|d(\y_i,\y_j) - d(\x_i,\x_j)\big| \leq C\delta
	\end{equation}
	holds with some absolute constant $C$.
\end{lemma}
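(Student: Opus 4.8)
The plan is to turn the Euclidean control of \eqref{c1}--\eqref{c2} into angular control by passing to the half-angle representation on the sphere, but to do this we first need a bound on $\|\x_i+\x_j\|$ as well, not just on $\|\x_i-\x_j\|$. The reason is that \eqref{c1} only governs the \emph{chord} $\|\x_i-\x_j\|$, which near an antipodal pair is $\approx 2$ and carries essentially no information about the angle; from \eqref{c1}--\eqref{c2} alone the desired conclusion actually fails (one can construct vectors $\y_i,\y_j$ satisfying \eqref{c1}--\eqref{c2} with $|d(\y_i,\y_j)-d(\x_i,\x_j)|=\Theta(\sqrt\delta)$). The way out is that the map of Lemma~\ref{lem:fastJL} is \emph{linear}, so $-\y_j$ is the image of $-\x_j\in\SSS^{p-1}$: applying Lemma~\ref{lem:fastJL} to the $2N$ points $\{\x_i\}\cup\{-\x_i\}$ (which only replaces $\log N$ by $\log 2N$, absorbed into the constant on $m$) gives, in addition to \eqref{c1}--\eqref{c2}, the \emph{anti-chord} estimate $\big|\,\|\y_i+\y_j\|-\|\x_i+\x_j\|\,\big|\le\delta\|\x_i+\x_j\|$ for all $i,j\in[N]$.

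Next I would normalize. Set $\hat\y_i:=\y_i/\|\y_i\|$; by \eqref{c2}, $\|\hat\y_i-\y_i\|=\big|\,1-\|\y_i\|\,\big|\le\delta$, so the triangle inequality transports both the chord and the anti-chord estimates to the normalized vectors at the price of an extra $2\delta$. Bounding the relative errors $\delta\|\x_i\pm\x_j\|$ by $2\delta$ (since $\|\x_i\pm\x_j\|\le2$), this yields
\[
\big|\,\|\hat\y_i-\hat\y_j\|-\|\x_i-\x_j\|\,\big|\le 4\delta,\qquad
\big|\,\|\hat\y_i+\hat\y_j\|-\|\x_i+\x_j\|\,\big|\le 4\delta .
\]
Write $\theta_{ij}:=\angle(\x_i,\x_j)=\pi\,d(\x_i,\x_j)$ and $\phi_{ij}:=\angle(\hat\y_i,\hat\y_j)=\pi\,d(\y_i,\y_j)$, both in $[0,\pi]$. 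For unit vectors $\u,\v$ one has $\|\u-\v\|=2\sin\tfrac{\angle(\u,\v)}2$ and $\|\u+\v\|=2\cos\tfrac{\angle(\u,\v)}2$, so the two displays become
\[
\big|\,\sin\tfrac{\phi_{ij}}2-\sin\tfrac{\theta_{ij}}2\,\big|\le 2\delta ,\qquad
\big|\,\cos\tfrac{\phi_{ij}}2-\cos\tfrac{\theta_{ij}}2\,\big|\le 2\delta .
\]

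To finish, add the squares of these two inequalities and use $(\sin b-\sin a)^2+(\cos b-\cos a)^2=4\sin^2\tfrac{b-a}2$ with $a=\theta_{ij}/2$, $b=\phi_{ij}/2$, which gives $4\sin^2\!\big(\tfrac{\phi_{ij}-\theta_{ij}}4\big)\le 8\delta^2$. Since $(\phi_{ij}-\theta_{ij})/4\in[-\pi/4,\pi/4]$, where $|\sin x|\ge\tfrac{2\sqrt2}{\pi}\,|x|$, this forces $|\phi_{ij}-\theta_{ij}|\le 2\pi\delta$, hence $|d(\y_i,\y_j)-d(\x_i,\x_j)|=|\phi_{ij}-\theta_{ij}|/\pi\le 2\delta$, i.e.\ the claim with $C=2$ (the range of large $\delta$ being trivial since $d\in[0,1]$).

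The only genuine obstacle here is the first step: noticing that \eqref{c1}--\eqref{c2} are not by themselves sufficient near antipodal pairs, and that one must inject control of $\|\x_i+\x_j\|$ through the linearity of the embedding. This is precisely the feature distinguishing binary embedding --- which must preserve angles over the full range $[0,\pi]$ --- from ordinary JL, where preserving the chord is the entire content. Once the anti-chord is available, the half-angle pair $(\sin,\cos)$ lives on the unit circle, where Euclidean distance and arc length are comparable up to constants, so no $\sqrt{\cdot}$ is lost and the $O(\delta)$ bound emerges directly.
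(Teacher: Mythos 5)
Your proof is correct and rests on the same key idea as the paper's: use linearity of the fast JL map to get control of pairs involving $-\x_j$ (the paper phrases this as ``WLOG the set is symmetric'' and uses $d(-\x,\y)=1-d(\x,\y)$, you phrase it as applying Lemma \ref{lem:fastJL} to the doubled point set to obtain the anti-chord estimate), then normalize via \eqref{c2} and convert chords to angles by half-angle trigonometry. The only difference is the finishing step --- your simultaneous $\sin/\cos$ bound with the identity $(\sin b-\sin a)^2+(\cos b-\cos a)^2=4\sin^2\frac{b-a}{2}$ avoids the paper's reduction to acute angles and its case analysis on $\delta$, and your explicit remark that \eqref{c1}--\eqref{c2} alone only give $\Theta(\sqrt{\delta})$ control near antipodal pairs is a correct observation that the paper's ``without loss of generality'' step silently papers over.
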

\begin{proof} 
	We postpone the proof to Appendix \ref{proof:lem:followFastJL}.
\end{proof}
The next result is our independence lemma, and is one of the key technical ideas that make our result possible. The result shows that for any fixed $\x$, Gaussian Toeplitz projection (with column flips) plus $\sign(\cdot)$ generate pair-wise independent binary codes.

\begin{lemma} \label{lem:pairwise_independent}
	Let $\g \sim \cN(0,\Ib_{2n-1})$, $\bzeta = \{\zeta_i\}_{i=1}^{i=n}$ be an i.i.d.\ Rademacher sequence. Let $\Tb$ be a random Toeplitz matrix constructed from $\g$ such that $\Tb_{i,j} = g_{i-j+n}$. Consider any two distinct rows of $\Tb$ say $\bxi$, $\bxi'$. For any two fixed vectors $\x,\y \in \mathbb{R}^{n}$, we define the following random variables 
\begin{align*}
	X = \sign\big\langle \bxi \odot\bzeta,\; \x \big\rangle, & \;\;X' = \sign\big\langle \bxi' \odot \; \bzeta, \x \big\rangle; \\ 
	Y = \sign\big\langle \bxi \odot\bzeta,\; \y \big\rangle, & \;\;\;Y' = \sign\big\langle \bxi' \odot \; \bzeta, \y \big\rangle.
\end{align*}
We have
\[
	X \bot X', X \bot Y', Y \bot X', Y \bot Y'.
\]
\end{lemma}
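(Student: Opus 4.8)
The plan is to reduce the four independence claims to a single statement about the joint distribution of a pair of signs, and then exploit the Gaussian structure. Fix the two distinct rows, indexed by offsets, so that $\bxi = (g_{k-1+n}, g_{k-2+n}, \dots, g_{k-n+n})$ and $\bxi' = (g_{k'-1+n}, \dots, g_{k'-n+n})$ for distinct $k \neq k'$; these are two overlapping windows of length $n$ into the Gaussian vector $\g \in \R^{2n-1}$. After the Rademacher column flip, set $\x' := \x \odot \bzeta$ and $\y' := \y \odot \bzeta$; then $X = \sign\langle \bxi, \x'\rangle$, $X' = \sign\langle \bxi', \x'\rangle$, and similarly for $Y, Y'$. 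I would first condition on $\bzeta$ and argue independence holds for \emph{every} fixed $\bzeta$, so that the column flips can be re-folded into the fixed vectors; thus it suffices to show: for any fixed vectors $\ub, \vb \in \R^n$ (playing the roles of $\x', \y'$) the pair $\big(\sign\langle \bxi, \ub\rangle, \sign\langle \bxi', \vb\rangle\big)$ has independent coordinates.

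The key observation is that $\langle \bxi, \ub \rangle$ and $\langle \bxi', \vb\rangle$ are each linear functionals of the single Gaussian vector $\g$, hence jointly Gaussian, each with mean zero. Two jointly Gaussian mean-zero variables $G_1, G_2$ have the property that $\sign(G_1) \bot \sign(G_2)$ precisely when $\mathrm{Cov}(G_1, G_2) = 0$: indeed if they are uncorrelated they are independent, and conversely $\P(\sign G_1 = \sign G_2)$ is a strictly monotone function of the correlation (the standard $\frac12 + \frac{1}{\pi}\arcsin \rho$ formula), equal to $\tfrac12$ only at $\rho = 0$. So the whole lemma comes down to the claim that for two \emph{distinct} shifted windows the corresponding covariances vanish. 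Writing $\bxi$ as indexing coordinates $\{k-1, \dots, k-n\}$ (mod the shift by $n$) of $\g$ and $\bxi'$ as $\{k'-1,\dots,k'-n\}$, we have $\mathrm{Cov}(\langle\bxi,\ub\rangle, \langle\bxi',\vb\rangle) = \sum_{a,b} u_a v_b \, \mathbb{E}[g_{k-a+n} g_{k'-b+n}] = \sum_{a,b} u_a v_b \, \ind(k - a = k' - b)$, i.e. the sum is over pairs $(a,b)$ with $b - a = k' - k \neq 0$. This is \emph{not} identically zero in general — so the naive ``distinct rows are orthogonal'' hope fails, and here is where the Rademacher flips do real work.

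This is the step I expect to be the main obstacle, and it is why the conditioning on $\bzeta$ must be undone rather than left in place. Restoring $\ub = \x \odot \bzeta$, $\vb = \y \odot \bzeta$, the covariance becomes $\sum_{b - a = k'-k} x_a \zeta_a y_b \zeta_b = \sum_{b-a = s} x_a y_b \zeta_a \zeta_b$ with $s = k' - k \neq 0$; since $s \neq 0$ the indices $a$ and $b = a+s$ are always \emph{distinct}, so each product $\zeta_a \zeta_b$ is a genuine (mean-zero, unit) Rademacher-type variable, and moreover — this is the crucial combinatorial point — across the range of $a$ these products $\{\zeta_a \zeta_{a+s}\}$ are pairwise... not quite independent, but one checks that $\mathbb{E}[\zeta_a\zeta_{a+s}] = 0$ for each term, hence $\mathbb{E}_{\bzeta}[\mathrm{Cov}] = 0$; however we need the covariance to vanish \emph{conditionally}, for (almost) every $\bzeta$, which it does not. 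The resolution is that we do \emph{not} need conditional vanishing: we should instead compute $\P(X \neq X')$ directly by integrating over both $\g$ and $\bzeta$, and show it equals $\P(X\neq X')$ computed from the marginals — equivalently, show $\mathbb{E}_{\bzeta}\big[\arcsin \rho(\bzeta)\big]$ factors correctly. Concretely, I would condition on $\bzeta$, express each of the four joint sign-probabilities via the $\arcsin$ formula in terms of $\rho(\bzeta) = \mathrm{Cov}/(\|\x\|\|\y\|\text{-normalizations})$, and then use the symmetry of the Rademacher distribution: flipping the sign of a single well-chosen $\zeta_a$ (one appearing in the cross terms but not, in the relevant normalized sense, elsewhere) sends $\rho(\bzeta) \mapsto$ a value that pairs up to cancel in expectation, by oddness of $\arcsin$ near the relevant region. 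Carrying this pairing-of-sign-patterns argument through carefully — identifying exactly which coordinate flip produces the cancellation and checking it simultaneously handles all four pairs $(X,X'), (X,Y'), (Y,X'), (Y,Y')$ — is the technical heart of the proof; the rest is the routine jointly-Gaussian and $\arcsin$ bookkeeping described above.
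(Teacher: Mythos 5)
Your setup is sound and you have correctly located the real difficulty: conditioning on $\bzeta$, the two inner products are jointly Gaussian with conditional correlation proportional to $\sum_{b-a=s} x_a y_b\,\zeta_a\zeta_{b}$ (with $s\neq 0$ the offset between the two rows), and this does \emph{not} vanish for fixed $\bzeta$, so the naive ``distinct windows are orthogonal'' argument fails. Your proposed resolution --- average over $\bzeta$, write $\Pr(X\neq Y'\mid\bzeta)=\tfrac12-\tfrac1\pi\arcsin\rho(\bzeta)$, and kill $\mathbb{E}_{\bzeta}[\arcsin\rho(\bzeta)]$ by a sign-flip pairing --- can indeed be made to work, but the specific pairing you describe is where the proof breaks. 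Flipping a \emph{single} well-chosen $\zeta_a$ negates only the one or two terms of $\rho(\bzeta)$ that contain coordinate $a$; it does not send $\rho(\bzeta)\mapsto-\rho(\bzeta)$, so the pairing does not cancel $\arcsin\rho(\bzeta)$ in expectation (and ``oddness of $\arcsin$ near the relevant region'' is not a substitute for an exact negation). As written, the technical heart of the argument is missing, so there is a genuine gap.

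The missing combinatorial idea is a \emph{block-alternating} flip keyed to the offset: partition $[n]$ into consecutive blocks of length $|s|$ and flip $\bzeta$ on every other block. Since $a$ and $a+s$ always lie in adjacent blocks, exactly one factor of each product $\zeta_a\zeta_{a+s}$ is negated, so the whole sum, hence $\rho(\bzeta)$, is exactly negated, while the flipped $\bzeta$ has the same distribution; oddness of $\arcsin$ then gives $\mathbb{E}_{\bzeta}[\arcsin\rho(\bzeta)]=0$, i.e.\ $\Pr(X\neq Y')=\tfrac12$, and your balanced-binary reduction finishes all four pairs. This is essentially the paper's device, though the paper deploys it without any Gaussian orthant formula: it forms the product $Z=\langle\bxi\odot\bzeta,\x\rangle\cdot\langle\bxi'\odot\bzeta,\y\rangle$, partitions the indices into blocks of length $\Delta=|s|$, and simultaneously flips $\g$ on even-indexed blocks and $\bzeta$ on odd-indexed blocks, which maps $Z\mapsto -Z$ while preserving its law; symmetry of $Z$ then directly yields $\Pr(Z>0)=\tfrac12$. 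If you repair your argument with the block-alternating flip, your conditional-Gaussian/arcsine route is a legitimate (slightly more computational) alternative, but without it the cancellation claim does not hold.
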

\begin{proof} 
	See Section \ref{proof:lem:pairwise_indepedent} for detailed proof.
\end{proof}

We are ready to prove the following result about Algorithm \ref{alg:fbe}.
\begin{theorem} \label{thm:fbe} \textbf{(Incorrect)}
Consider Algorithm \ref{alg:fbe} with random matrices $\bPhi,\;\bPsi$ defined in (\ref{fastJL}) and (\ref{partialGaussianRandom}) respectively. For finite number of points $ \{\x_i\}_{i=1}^{N}$, let $\b_i$ be the  binary codes of $\x_i$ generated by Algorithm \ref{alg:fbe}. Suppose we set
\[
	B \geq c\log{N},\;n \geq c'(1/\delta^2)\log{N}\log{p}\log^3(\log N),\; n \geq m/B \geq c''(1/\delta^2),
\]
with some absolute constants $c, c', c''$, then with probability at least $0.98$, we have that for any $i,j \in [N]$
\[
	\big|d_{\cH}(\b_i,\b_j; B) - d(\x_i,\x_j)\big| \leq \delta.
\]
Similarity metric $d_{\cH}(\cdot,\cdot;B)$ is the median of normalized Hamming distance defined in (\ref{B-wise Hamming}).
\end{theorem}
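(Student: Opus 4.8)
The plan is to split the total error into a ``fast–JL'' contribution and a ``Toeplitz'' contribution, control the first with Lemmas~\ref{lem:fastJL}–\ref{lem:followFastJL}, and control the second by combining the pairwise–independence Lemma~\ref{lem:pairwise_independent} with a median–of–blocks boosting argument. First I would reduce to the intermediate points. Since $\sign(\cdot)$ is invariant under positive rescaling, we may rescale $\y_i=\bPhi\x_i$ so that Lemma~\ref{lem:fastJL} applies; using it with distortion parameter $\delta/(2C)$ (permissible because the hypothesis $n\geq c'\delta^{-2}\log N\log p\log^3\log N$ absorbs the constant $C$ of Lemma~\ref{lem:followFastJL}) and then Lemma~\ref{lem:followFastJL}, we get an event $\mathcal A$ of probability at least $0.99$ on which $|d(\y_i,\y_j)-d(\x_i,\x_j)|\le\delta/2$ for all $i,j\in[N]$. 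We work on $\mathcal A$ and condition on $\{\y_i\}$; by the triangle inequality it then suffices to show that, with probability at least $0.99$ over the $B$ independent Toeplitz matrices $\bPsi^{(1)},\dots,\bPsi^{(B)}$, we have $|d_{\cH}(\b_i,\b_j;B)-d(\y_i,\y_j)|\le\delta/2$ for all $i,j$.

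Next, fix a pair $(i,j)$ and a block $b\in[B]$. The $k$-th bit of block $b$ is $\sign\langle\bxi_k^{(b)}\odot\bzeta^{(b)},\y_i\rangle$, where $\bxi_k^{(b)}$ is the $k$-th row of $\Tb^{(b)}$; each such row vector $\bxi_k^{(b)}\odot\bzeta^{(b)}$ is marginally $\cN(0,\Ib_n)$, so by the hyperplane–tessellation identity of Section~\ref{sec:uniform random projection} the disagreement indicator
$Z_k^{(b)}:=\mathbbm{1}\{\sign\langle\bxi_k^{(b)}\odot\bzeta^{(b)},\y_i\rangle\ne\sign\langle\bxi_k^{(b)}\odot\bzeta^{(b)},\y_j\rangle\}$
has mean exactly $d(\y_i,\y_j)$, and the block-$b$ normalized Hamming distance equals $\tfrac{B}{m}\sum_k Z_k^{(b)}$. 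Here is where Lemma~\ref{lem:pairwise_independent} enters, applied with $\x=\y_i,\y=\y_j$: it ensures that the sign bits coming from two distinct rows of the same Toeplitz block are pairwise independent, and in particular $\Cov(Z_k^{(b)},Z_l^{(b)})=0$ for $k\ne l$. Hence $\Var\!\big(\sum_k Z_k^{(b)}\big)\le\tfrac14\,(m/B)$, and Chebyshev's inequality gives
\[
\Pr\!\Big[\,\big|\tfrac{B}{m}\textstyle\sum_k Z_k^{(b)}-d(\y_i,\y_j)\big|>\tfrac{\delta}{2}\,\Big]\ \le\ \frac{B/m}{\delta^2}\ \le\ \frac14
\]
once $m/B\ge 4/\delta^2$, i.e.\ once the constant $c''$ is large enough.

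Finally, boost by the median over blocks. The $B$ blocks use independent Toeplitz matrices, so for the fixed pair $(i,j)$ the ``bad'' events $E_b:=\{\,|\text{block-}b\text{ Hamming distance}-d(\y_i,\y_j)|>\delta/2\,\}$ are independent across $b$, each of probability at most $1/4$; the median $d_{\cH}(\b_i,\b_j;B)$ can deviate from $d(\y_i,\y_j)$ by more than $\delta/2$ only if at least $B/2$ of the $E_b$ occur, which by a Chernoff bound for $\mathrm{Binomial}(B,1/4)$ has probability at most $e^{-c_1B}$. Choosing $B\ge c\log N$ with $c$ large enough makes this at most $1/(100N^2)$, and a union bound over the fewer than $N^2$ pairs yields the desired event of probability at least $0.99$; intersecting with $\mathcal A$ and applying the triangle inequality gives probability at least $0.98$ and finishes the proof. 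The main obstacle is the in-block concentration step: the rows of a Toeplitz matrix share Gaussian entries and are far from independent, so the Hoeffding argument behind Proposition~\ref{thm:urp} is unavailable, and the whole construction rests on Lemma~\ref{lem:pairwise_independent} supplying pairwise (rather than full) independence. That only buys a Chebyshev, i.e.\ polynomial, tail within each block, which is precisely why the median over $\Omega(\log N)$ independent blocks is needed to recover a tail strong enough for the union bound over all $N^2$ pairs.
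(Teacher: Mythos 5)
Your proposal is correct and follows essentially the same route as the paper's proof: control the fast-JL stage via Lemmas~\ref{lem:fastJL}--\ref{lem:followFastJL}, get exact per-bit disagreement probability $d(\y_i,\y_j)$ from the Gaussian marginal of each flipped Toeplitz row, use Lemma~\ref{lem:pairwise_independent} to justify a Chebyshev bound within each block, and boost via the median over $\Omega(\log N)$ independent blocks with a Chernoff/Hoeffding bound and a union bound over the $N^2$ pairs. The only differences are cosmetic (your explicit $\delta/2+\delta/2$ error split versus the paper's constant-absorbing $C\delta+\delta$, and phrasing the block-level tail as a Binomial Chernoff bound rather than Hoeffding), so no gaps to report.
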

\begin{proof} 
	See Section \ref{proof:thm:fbe} for detailed proof.
\end{proof}

The above result suggests that the measurement complexity of our fast algorithm is $O\big(\frac{1}{\delta^2}\log{N}\big)$ which matches the performance of Algorithm \ref{alg:urp} based on fully random matrix. Note that this measurement complexity can not be improved significantly by any data-oblivious binary embedding with any similarity metric, as suggested by Theorem \ref{thm:oblivious lower bound}.

\noindent{\bf Running time:} The first part of our algorithm takes time $O\big(p\log p\big)$. Generating a single block of binary codes from partial Toeplitz matrix takes time $O\big(n\log(\frac{1}{\delta})\big)$\footnote{Matrix-vector multiplication for $m$-by-$n$ partial Toeplitz matrix can be implemented in running time  $O\big(n\log m\big)$.}. Thus the total running time is $O\big(B n \log{\frac{1}{\delta}} + p\log p\big) = O\big(\frac{1}{\delta^2}\log{\frac{1}{\delta}}\log^2 N \log p \log^3(\log N) + p\log p\big)$. By ignoring the polynomial $\log\log$ factor, the second term $O\big(p\log p\big)$ dominates when $\log N  \lesssim \delta\sqrt{p/\log{\frac{1}{\delta}}} $.

\noindent{\bf Comparison to an alternative algorithm:} Instead of utilizing the partial Gaussian Toeplitz projection, an alternative method, to the best of our knowledge not previously stated,  is to use fully random Gaussian projection in the second part of our algorithm. We present the details in Algorithm \ref{alg:alter-fbe}. By combining Proposition \ref{thm:urp} and Lemma \ref{lem:fastJL}, it is straightforward to show this algorithm still achieves the same measurement complexity $O\big(\frac{1}{\delta^2}\log N\big)$. The corresponding running time is $O\big(\frac{1}{\delta^4}\log^2 N \log p\log^3(\log N) + p \log p\big)$, so it is fast when $\log N \lesssim \delta^2\sqrt{p}$. Therefore our algorithm has an improved dependence on $\delta$. This improvement comes from fast multiplication of partial Toeplitz matrix and a pair-wise independence argument shown in Lemma \ref{lem:pairwise_independent}.

\begin{algorithm}[!htb]
	\caption{Alternative Fast Binary Embedding}
	\label{alg:alter-fbe}
	\begin{algorithmic}[1]
		\INPUT  Finite number of points $\{\x_i\}_{i=1}^{N}$ where each point $\x_i \in \SSS^{p-1}$, embedded dimension $m$, intermediate dimension $n$.
		\STATE Draw a $n$-by-$p$ sub-sampled Walsh-Hadamard matrix $\Phib$ according to \eqref{fastJL}. Construct $m$-by-$n$ matrix $\Ab$ where each entry is drawn independently from $\cN(0,1)$.
		\FOR{$i = 1, 2, \ldots, N$}
		\STATE $\b_{i} \leftarrow \sign(\Ab\Phib\x_i)$
		\ENDFOR
		\OUTPUT $\{\b_i\}_{i=1}^{N}$
	\end{algorithmic}
\end{algorithm}

\subsection{$\delta$-uniform Embedding for General $K$} \label{delta-uniform embedding}

In this section, we turn back to the fully random projection binary embedding (Algorithm \ref{alg:urp}). Recall that in Proposition \ref{thm:urp}, we show for finite size $K$,  $m = O(\frac{1}{\delta^2}\log{|K|})$ measurements are sufficient to achieve $\delta$-uniform embedding. For general $K$, the challenge is that there might be an infinite number of distinct points in $K$, so Proposition \ref{thm:urp} cannot be applied. In proving the JL lemma for an infinite set $K$, the standard technique is either constructing an $\epsilon$-net of $K$ or reducing the distortion to the deviation bound of a Gaussian process. However, due to the non-linearity essential for binary embedding, these techniques cannot be directly extended to our setting. Therefore strengthening Proposition \ref{thm:urp} to infinite size $K$ imposes significant technical challenges. Before stating our result, we first give some definitions. 
\begin{definition} (Gaussian mean width)
Let $\g \sim \cN(0,\Ib_{p})$. For any set $K \subseteq \SSS^{p-1}$, the Gaussian mean width of $K$ is defined as 
\[
	w(K) := \mathbb{E}_{\g} \sup_{\x \in K} \big| \big\langle \g, \x\big\rangle\big|.
\]
\end{definition}
Here, $w(K)^2$ measures the effective dimension of set $K$. In the trivial case $K = \SSS^{p-1}$, we have $w(K)^2 \lesssim p$. However, when $K$ has some special structure, we may have $w(K)^2 \ll p$. For instance, when $K = \{\x \in \SSS^{p-1}: |\supp(\x)| \leq s \}$, it has been shown that $w(K) = \Theta(\sqrt{s\log(p/s)})$ (see Lemma 2.3 in \citet{plan2013robust}).

For a given $\delta$, we define $K_{\delta}^+$, the {\em expanded version} of $K \subseteq \SSS^{p-1}$ as:
\begin{equation}
\label{def:K+}
	K^{+}_{\delta} := K\bigcup \big\{\z \in \SSS^{p-1}: \z = \frac{\x - \y}{\|\x - \y\|_2}, \; \forall \; \x,\y \in K\; \text{if} \;\delta^2 \leq \|\x - \y\|_2 \leq \delta\big\}.
\end{equation}
In other words, $K^+_{\delta}$ is constructed from $K$ by adding the normalized differences between pairs of points in $K$ that are within $\delta$ but not closer than $\delta^2$. Now we state the main result as follows.
\begin{theorem} \label{thm:general_K}
	Consider any $K \subseteq \SSS^{p-1}$. Let $\Ab \in \RR^{m \times p}$ be an i.i.d.\ Gaussian matrix where each row $\Ab_i \sim \cN(0,\Ib_{p})$. For any two points $\x,\y \in K$, $\d_{\Ab}(\x,\y)$ is defined in (\ref{AHamming}). Expanded set $K^+_{\delta}$ is defined in (\ref{def:K+}). When
	\[
		m \geq c\frac{w(K^+_{\delta})^2}{\delta^4},
	\]
	with some absolute constant $c$, then we have that
	\[
		\sup_{\x,\y \in K} \big| d_{\Ab}(\x,\y) - d(\x,\y)\big| \leq \delta
	\]
	holds with probability at least $1 - c_1\exp(-c_2\delta^2m)$ where $c_1,c_2$ are absolute constants.
\end{theorem}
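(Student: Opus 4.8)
The plan is to bootstrap the finite–set guarantee (Proposition~\ref{thm:urp}) from a net of $K$ up to all of $K$, and the passage from net points to nearby points is exactly what forces the expanded set $K^+_\delta$ into the bound. First I would fix a minimal $\delta$–net $\cN$ of $K$ in Euclidean distance; Sudakov minoration gives $\log|\cN|\lesssim (w(K)/\delta)^2\le (w(K^+_\delta)/\delta)^2$. Since $d$ and $d_{\Ab}$ are both pseudometrics on $\SSS^{p-1}$ (the latter being a normalized Hamming distance on the codes $\sign(\Ab\cdot)$), for any $\x,\y\in K$ with net points $\x',\y'\in\cN$,
\[
  \big|d_{\Ab}(\x,\y)-d(\x,\y)\big|\ \le\ \big|d_{\Ab}(\x',\y')-d(\x',\y')\big| + d_{\Ab}(\x,\x')+d_{\Ab}(\y,\y') + d(\x,\x')+d(\y,\y').
\]
On the event of Proposition~\ref{thm:urp} applied to $\cN$ the first term is $\le\delta$, which requires $m\ge c\delta^{-2}\log|\cN|\asymp w(K^+_\delta)^2/\delta^4$ — this is where the stated bound comes from — and $d(\x,\x'),d(\y,\y')\lesssim\|\cdot\|_2\le\delta$ deterministically. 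Hence, after rescaling $\delta$ by a constant, it remains to show that with high probability $\sup\{\,d_{\Ab}(\x,\x'):\x,\x'\in K,\ \|\x-\x'\|_2\le\delta\,\}\lesssim\delta$.

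\textbf{The perturbation estimate and where $K^+_\delta$ enters.} The elementary fact I would use is that the hyperplane $\Ab_i$ separates $\x,\x'$ only if $|\la\Ab_i,\x\ra|\le\|\x-\x'\|_2\,|\la\Ab_i,\z\ra|$, where $\z=(\x-\x')/\|\x-\x'\|_2$; moreover $\la\x,\z\ra=\|\x-\x'\|_2/2\le\delta/2$, so $\x$ and $\z$ are nearly orthogonal, and when $\|\x-\x'\|_2\ge\delta^2$ the direction $\z$ lies in $K^+_\delta$ by construction. I would split $\|\x-\x'\|_2\in[\delta^2,\delta]$ into $O(\log(1/\delta))$ dyadic scales $\|\x-\x'\|_2\asymp r$, so that at scale $r$
\[
  d_{\Ab}(\x,\x')\ \le\ \frac1m\,\#\big\{\,i:\ |\la\Ab_i,\x\ra|\le r\,|\la\Ab_i,\z\ra|\,\big\},\qquad \x\in K,\ \z\in K^+_\delta .
\]
A single indicator has expectation $\lesssim r$: near–orthogonality makes $(\la\Ab_i,\x\ra,\la\Ab_i,\z\ra)$ a nondegenerate Gaussian pair, and $\Pr(|g_1|\le r|g_2|)\lesssim r$ uniformly over correlations $\le\delta/2$. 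So everything comes down to showing the empirical average concentrates around $r$ (not merely around $\sqrt{r/m}$) uniformly over $\x\in K,\ \z\in K^+_\delta$.

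\textbf{The uniform concentration, and the leftover regime.} For the uniform bound I would replace the separation indicator by a Lipschitz surrogate, control the fluctuation of the smoothed empirical process by a symmetrization/contraction/Gaussian–comparison argument in terms of $w(K^+_\delta)/\sqrt m$, and control the smoothing error by a thin–slab count over $K^+_\delta$ — choosing the smoothing width so that both contributions are $\lesssim r$ for every $r\ge\delta^2$ once $m\gtrsim w(K^+_\delta)^2/\delta^4$; a union bound over the $O(\log(1/\delta))$ scales is then free. The residual regime $\|\x-\x'\|_2<\delta^2$ is disposed of by the same estimate with $\z$ ranging over normalized differences at scales below $\delta^2$ — a set whose Gaussian width is still $O(w(K^+_\delta))$, or else handled by a direct bound — since there $d(\x,\x')\lesssim\delta^2$ and the expectation bound already gives $d_{\Ab}(\x,\x')\lesssim\delta^2+(\text{deviation})\lesssim\delta$. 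Combining with Step~1 and taking the intersection of the two high–probability events (each of the form $1-c_1\exp(-c_2\delta^2 m)$) finishes the proof.

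\textbf{Main obstacle.} The crux is the concentration in Step~3: converting the non–Lipschitz separation indicators into something whose uniform deviation is governed by $w(K^+_\delta)/\sqrt m$ with the right $\delta$–dependence. The naïve blurring/contraction bookkeeping — balancing the Lipschitz constant of the smoothed indicator against the thin–slab error — is essentially what costs Plan--Vershynin their $\delta^{-6}$; getting down to $\delta^{-4}$ seems to require exploiting both the small variance ($\lesssim r$) of the indicators at each scale (a variance–aware / Bernstein–type empirical process bound rather than a worst–case one) and the fact that, because of $K^+_\delta$, the difference directions at scale $r$ already live in a fixed set of controlled Gaussian width. A secondary subtlety is the $\|\x-\x'\|_2<\delta^2$ regime, where one must avoid a spurious $\sqrt p$ factor; this is possible precisely because the relevant slab half–width is $r\,|\la\Ab_i,\z\ra|$ with $\z$ a unit difference direction, not $r\,\|\Ab_i\|_2$.
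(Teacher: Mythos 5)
There is a genuine gap, and you essentially flag it yourself: the entire weight of your argument rests on Step~3, the claim that the empirical fraction of separating hyperplanes concentrates around its mean $\lesssim r$, \emph{uniformly} over $\x\in K$ and $\z\in K^+_\delta$ at every dyadic scale $r\in[\delta^2,\delta]$, once $m\gtrsim w(K^+_\delta)^2/\delta^4$. You do not prove this; you only sketch ``Lipschitz surrogate + symmetrization/contraction + thin-slab count'' and then concede in your final paragraph that this exact bookkeeping is what produces the $\delta^{-6}$ of \citet{plan2014dimension}, and that beating it ``seems to require'' a variance-aware empirical-process bound which you do not supply. Since the whole point of the theorem is the improvement from $\delta^{-6}$ to $\delta^{-4}$, leaving this step as an acknowledged obstacle means the proof is not complete at precisely the place where the new idea is needed. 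A second, smaller flaw: for the residual regime $\|\x-\x'\|_2<\delta^2$ you assert that the normalized differences at scales below $\delta^2$ form ``a set whose Gaussian width is still $O(w(K^+_\delta))$.'' That is unjustified and false in general: $K^+_\delta$ by definition \emph{excludes} directions coming from pairs closer than $\delta^2$, and for general $K$ the width of those short-scale difference directions is not controlled by $w(K^+_\delta)$ (this cutoff is exactly why the paper defines $K^+_\delta$ with the lower threshold $\delta^2$). The fallback ``or else handled by a direct bound'' is not an argument, because uniformity over the continuum is still needed there.

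For contrast, the paper's proof avoids the uniform indicator-process bound altogether by a two-scale discretization. It takes a $\delta$-net $\cN_\delta$ of $K$ and a $\delta$-net $\cN'_\delta$ of $K^+_\delta\setminus K$, and then builds $\delta^2$-nets of the unit circles spanned by pairs of points in $\cN_\delta\cup\cN'_\delta$ (the set $\cG_\delta$). The role of $K^+_\delta$ is to guarantee that every $\x\in K$ has a surrogate $\u\in\cG_\delta$ with $\|\x-\u\|_2\leq 3\delta^2$. The continuum-versus-discrete gap is thus shrunk to scale $\delta^2$, where the crude $\ell_1$ concentration of Lemma~\ref{lem:l1} (deviation $w(K)/\sqrt m$, hence the requirement $m\gtrsim w^2/\delta^4$) together with the observation that a sign flip between $\x$ and $\u$ forces $\abs{\la \Ab_i,\u\ra}\leq\abs{\la \Ab_i,\u-\x\ra}$, plus a finite-set small-ball count over $\cG_\delta$ (as in Lemma~\ref{lem:absolute}), gives $T_1\lesssim\delta$; the remaining comparisons live on two-dimensional circles through pairs of net points and are handled exactly by Lemma~\ref{lem:2dimension_space} with a union bound. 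If you want to salvage your dyadic/empirical-process route, you would need to actually prove the scale-$r$ variance-aware uniform deviation bound you postulate, and separately repair the sub-$\delta^2$ regime without appealing to the width of short-scale difference directions.
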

\begin{proof}
	See Section \ref{proof:thm:general_K} for detailed proof.
\end{proof}
\begin{remark}
We compare the above result to Theorem 1.5 from the recent paper \cite{plan2014dimension} where it is proved that for $m \gtrsim w(K)^2/\delta^6$, Algorithm \ref{alg:urp} is guaranteed to achieve $\delta$-uniform embedding for general $K$. Based on definition (\ref{def:K+}), we have 
\[
	w(K) \leq w(K^+_{\delta}) \leq \frac{1}{\delta^2}w(K-K) \lesssim \frac{1}{\delta^2}w(K).
\] 
Thus in the worst case, Theorem \ref{thm:general_K} recovers the previous result up to a factor $\frac{1}{\delta^2}$. More importantly, for many interesting sets one can show $w(K^+_{\delta}) \lesssim w(K)$; in such cases, our result leads to an improved dependence on $\delta$. We give several such examples as follows:
\begin{itemize}
	\item {\bf Low rank set.} For some $\Ub \in \RR^{p\times r}$ such that $\Ub^{\top}\Ub = \Ib_{r}$, let $K = \{\x \in \SSS^{p-1}: \x = \Ub\c, \; \forall\; \c \in \SSS^{r-1} \}$. We simply have $K = K^{+}_{\delta}$ and $w(K) \lesssim \sqrt{r}$. Our result implies $m = O\big(r/\delta^4\big)$.
	\item {\bf Sparse set.} $K = \{\x \in \SSS^{p-1}: |\supp(\x)| \leq s\}$. In this case we have $K^+_{\delta} \subseteq \{\x \in \SSS^{p-1}: |\supp(\x)| \leq 2s\}$. Therefore $w(K^{+}_{\delta}) = \Theta(\sqrt{s\log(p/s)})$. Our result implies $m = O\big(\frac{s\log(p/s)}{\delta^4}\big)$.
	\item {\bf Set with finite size.} $|K| < \infty$. As $w(K) \lesssim \sqrt{\log|K|}$ and $|K^+_{\delta}| \leq 2|K|$,  our result implies $m = O\big(\log{|K|}/\delta^4\big)$. We thus recover Proposition \ref{thm:urp} up to factor $1/\delta^2$. 
\end{itemize}
\end{remark}
Applying the result from \citet{plan2014dimension} to the above sets implies similar results but the dependence on $\delta$ becomes $1/\delta^6$.

\section{Numerical Results} \label{sec:numerical results}

In this section, we present the results of experiments  we conduct to validate our theory and compare the performance of the following three algorithms we discussed: uniform random projection (URP) (Algorithm \ref{alg:urp}), fast binary embedding (FBE) (Algorithm \ref{alg:fbe}) and the alternative fast binary embedding (FBE-2) (Algorithm \ref{alg:alter-fbe}). We first apply these algorithms to synthetic datasets. In detail, given parameters $(N, p)$, a synthetic dataset is constructed by sampling $N$ points from $\SSS^{p-1}$ uniformly at random. Recall that $\delta$ is the maximum embedding distortion among all pairs of points. We use $m$ to denote the number of binary measurements. Algorithm FBE needs parameters $n, B$, which are intermediate dimension and number of blocks respectively. Based on Theorem \ref{thm:fbe}, $n$ is required to be  proportional to $m$ (up to some logarithmic factors) and $B$ is required to be proportional to $\log N$. We thus set $n \approx 1.3m$, $B \approx 1.8\log N$. We also set $n \approx 1.3m$ for FBE-2. In addition, we fix $p = 512$. We report our first result showing the functional relationship between $(m,N,\delta)$ in Figure \ref{fig:syn_sim}. In particular, panel \ref{fig:synthetic1} shows the the change of distortion $\delta$ over the number of measurements $m$ for fixed $N$. We observe that, for all the three algorithms,  $\delta$ decays with $m$ at the rate predicted by Proposition \ref{thm:urp} and Theorem \ref{thm:fbe}. Panel \ref{fig:synthetic2} shows the empirical relationship between $m$ and $\log N$ for fixed $\delta$. As predicted by our theory (lower bound and upper bound), $m$ has a linear dependence on $\log N$.  

\begin{figure}[ht]
	\centering
	\subfigure[$N = 300$]{
		\centering
		\label{fig:synthetic1}
		\includegraphics[width=0.45\columnwidth]{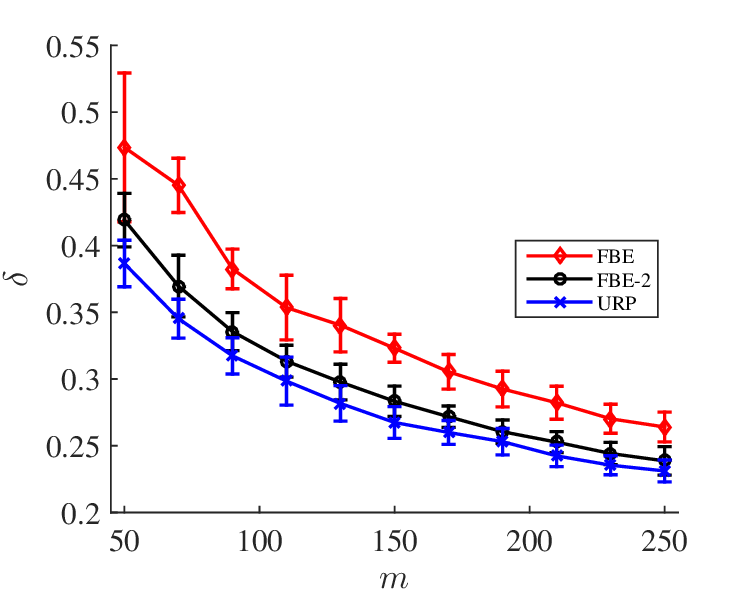}
	}
	\hfill
	\subfigure[$\delta = 0.3$]{
		\centering
		\label{fig:synthetic2}
		\includegraphics[width=0.45\columnwidth]{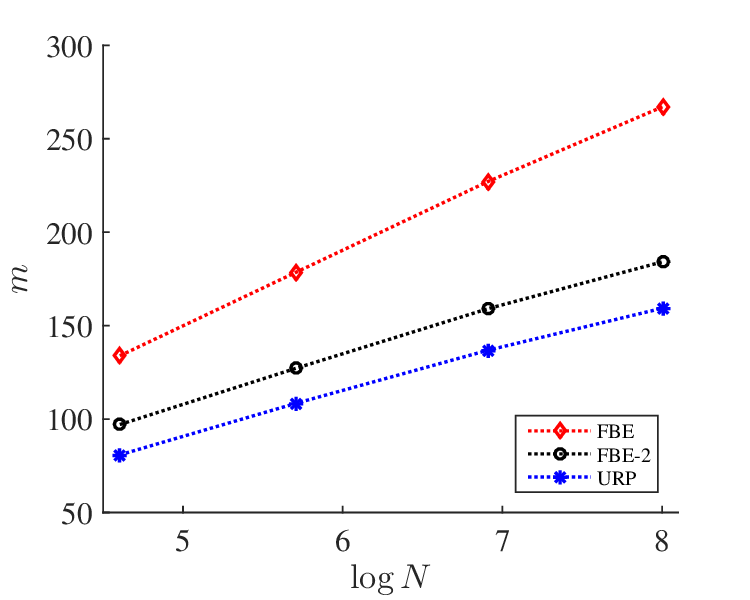}
	}
	\caption{Results on synthetic datasets. {\bf (a)} Each point, along with the standard deviation represented by the error bar, is an average of $50$ trials each of which is based on a fresh synthetic dataset with size $N = 300$ and newly constructed embedding mapping. {\bf (b)} Each point is computed by slicing at $\delta = 0.3$ in similar plots like (a) but with the corresponding $N$.}
	\label{fig:syn_sim}
\end{figure}

\begin{figure}[ht]
	\centering
	\subfigure[$m = 5000$]{
		\centering
		\includegraphics[width=0.31\columnwidth]{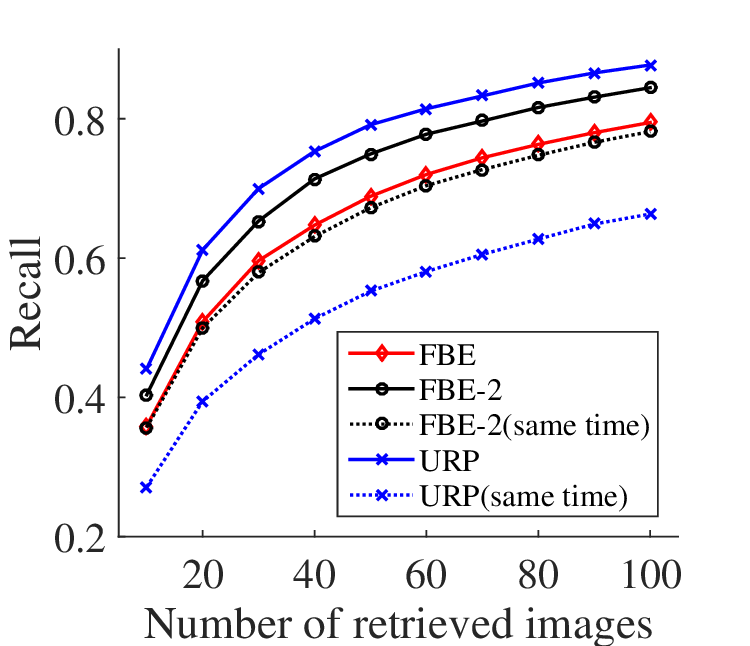}
	}
	\hfill
	\subfigure[$m = 10000$]{
		\centering
		\includegraphics[width=0.31\columnwidth]{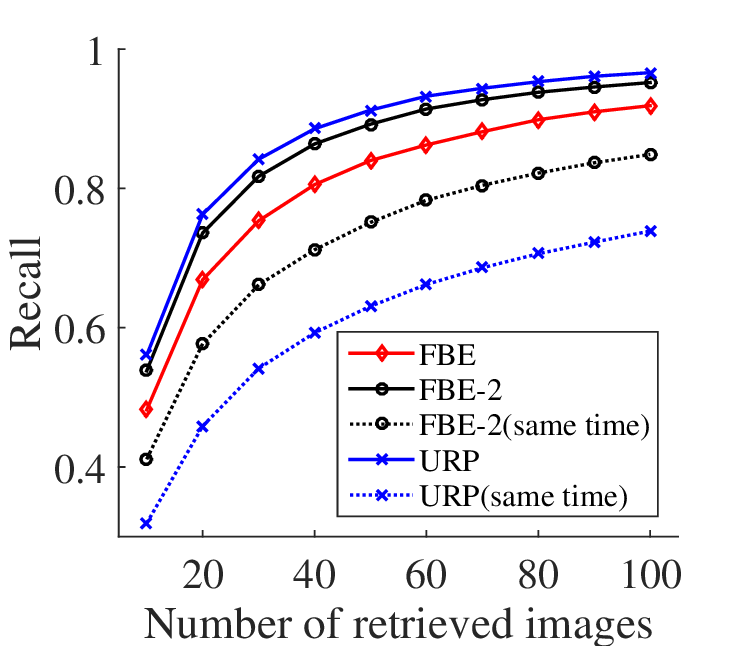}
	}
	\hfill
	\subfigure[$m = 15000$]{
		\centering
		\includegraphics[width=0.31\columnwidth]{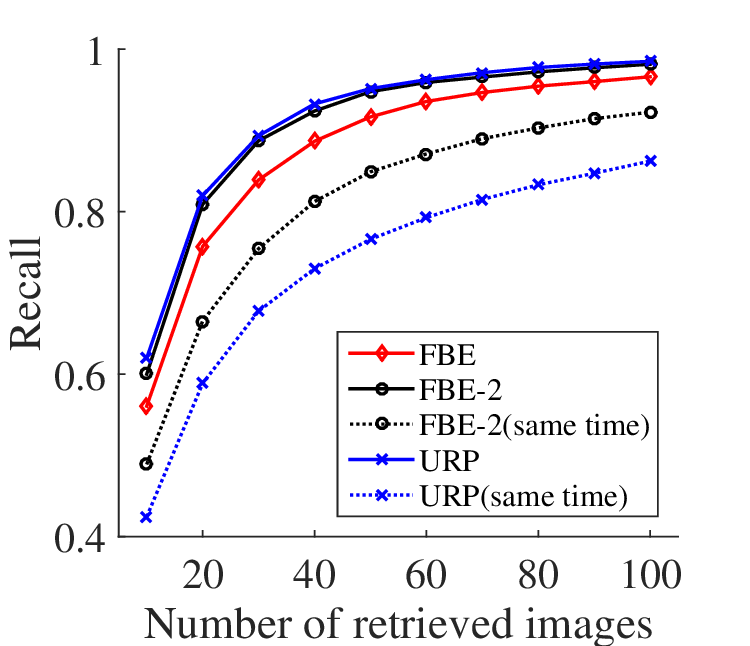}
	}
	\caption{Image retrieval results on Flickr-25600. Each panel presents the recall for specified number of measurements $m$. Black and blue dot lines are respectively the recall of FBE-2 and URP with less number of measurements but the same running time as FBE.}
	\label{fig:real_sim}
\end{figure}

A popular application of binary embedding is image retrieval, as considered in \citep{gong2011iterative, gong2013learning, yu2014circulant}. We thus conduct an experiment on the Flickr-25600 dataset that consists of $10k$ images from Internet. Each image is represented by a $25600$-dimensional normalized Fisher vector. We take $500$ randomly sampled images as query points and leave the rest as base for retrieval. The {\em relevant images} of each query are defined as its $10$ nearest neighbors based on geodesic distance. Given $m$, we apply FBE, FBE-2 and URP to convert all images into $m$-dimensional binary codes. In particular, we set $B = 10$ for FBE and $n \approx 1.3m$ for FBE and FBE-2. Then we leverage the corresponding similarity metrics, \eqref{B-wise Hamming} for FBE and Hamming distance for FBE-2 and URP, to retrieve the nearest images for each query. The performance of each algorithm is characterized by {\em recall}, i.e., the number of retrieved {\em relevant} images divided by the total number of relevant images. We report our second result in Figure \ref{fig:real_sim}. Each panel shows the average recall of all queries for a specified $m$. We note that FBE-2, as a fast algorithm, performs as well as URP with the same number of measurements. In order to show the running time advantage of our fast algorithm FBE, we also present the performance of FBE-2 and URP with fewer measurements such that they can be computed with the same time as FBE. As we observe, with large number of measurements, FBE-2 and URP perform marginally better than FBE while FBE has a significant improvement over the two algorithms under identical time constraint.

\section{Proofs} \label{sec:proofs}
\subsection{Proof of Data-Oblivious Lower Bound (Theorem \ref{thm:oblivious lower bound})}\label{proof:thm:oblivious lower bound}

The proof of the data-oblivious lower bound is based on a lower bound
for one-way communication of Hamming distance due
to~\cite{jayram2013optimal}.

\begin{definition}[One-way communication of Hamming distance]
  In the one-way communication model, Alice is given $\a \in \{0,1\}^n$
  and Bob is given $\b \in \{0, 1\}^n$.  Alice sends Bob a message $\c
  \in \{0, 1\}^m$, and Bob uses $\b$ and $\c$ to output a value $x \in
  \RR$.  Alice and Bob have shared randomness.

  Alice and Bob solve the $(\delta, \eps)$ additive Hamming distance estimation
  problem if $\abs{x - d_\cH(\a, \b)} \leq \delta$ with probability
  $1-\eps$.
\end{definition}

The result proven in~\cite{jayram2013optimal} is a lower bound for the
\emph{multiplicative} Hamming distance estimation problem, but their
techniques readily yield a bound for the additive case as well:
\begin{lemma}\label{l:hamming-hard}
  Any algorithm that solves the $(\delta, \eps)$ additive Hamming
  distance estimation problem must have $m = \Omega((1/\delta^2) \log
  (1/\eps))$ as long as this is less than $n$.

\end{lemma}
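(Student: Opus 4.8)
The plan is to obtain Lemma~\ref{l:hamming-hard} as an almost immediate consequence of the \emph{multiplicative} Hamming/$\ell_0$ one-way communication lower bound of~\cite{jayram2013optimal}, the point being that their lower bound is witnessed by inputs on which the Hamming distance is a constant fraction of $n$, so that on those inputs an additive $\delta$ guarantee and a multiplicative $\Theta(\delta)$ guarantee are the same thing. Write $H(\a,\b):=n\,d_{\cH}(\a,\b)$ for the unnormalized Hamming distance. The form of~\cite{jayram2013optimal} I would invoke is: there is a distribution $\mu$ on pairs $(\a,\b)\in\{0,1\}^n\times\{0,1\}^n$ under which $H(\a,\b)\ge c\,n$ almost surely (for an absolute constant $c>0$), with the property that any one-way protocol with shared randomness in which Alice sends an $m$-bit message and Bob then outputs a $(1\pm\gamma)$-multiplicative approximation to $H(\a,\b)$ with probability $1-\eps$ must have $m=\Omega(\gamma^{-2}\log(1/\eps))$, provided $\gamma^{-2}\log(1/\eps)=O(n)$. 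This is exactly what~\cite{jayram2013optimal} establish for $\ell_0$ estimation; translating from their sketching/streaming phrasing to one-way communication is routine, and the $\log(1/\eps)$ factor is the substance of their subconstant-error analysis.

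Given this, the reduction is essentially trivial and, crucially, needs no padding or change of instance size. Suppose some protocol solves the $(\delta,\eps)$ additive Hamming distance estimation problem on $n$-bit strings with an $m$-bit message. By definition it is correct on \emph{every} input, hence in particular on every $(\a,\b)$ in the support of $\mu$. Run it on $(\a,\b)\sim\mu$: with probability $1-\eps$ Bob outputs $x$ with $|x-d_{\cH}(\a,\b)|\le\delta$, i.e.\ $|n x-H(\a,\b)|\le\delta n$. Since $H(\a,\b)\ge c\,n$ on this support, $n x$ is a $\bigl(1\pm\tfrac{\delta}{c}\bigr)$-multiplicative approximation of $H(\a,\b)$. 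Thus the assumed additive protocol is, verbatim, an $m$-bit one-way protocol that $(1\pm\gamma)$-multiplicatively approximates $H(\a,\b)$ under $\mu$ with failure probability $\eps$, for $\gamma:=\delta/c=\Theta(\delta)$.

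Invoking the quoted bound with this $\gamma$ yields $m=\Omega(\gamma^{-2}\log(1/\eps))=\Omega(\delta^{-2}\log(1/\eps))$, and the side condition $\gamma^{-2}\log(1/\eps)=O(n)$ is exactly $\delta^{-2}\log(1/\eps)=O(n)$, i.e.\ the hypothesis that $(1/\delta^2)\log(1/\eps)$ is at most a constant times $n$; this proves the lemma. The one step carrying real content — and the precise meaning of the remark that ``their techniques readily yield the additive case'' — is the claim that the lower bound of~\cite{jayram2013optimal} may be taken to be witnessed by an instance family with $H(\a,\b)=\Theta(n)$ (equivalently, that their argument already lower-bounds distinguishing two \emph{constant} normalized Hamming distances differing by $\Theta(\gamma)$, which is an additive statement in disguise). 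I would pin this down by reading off from their hard distribution that Alice's and Bob's strings have weight and agreement that are constant fractions of $n$; I expect no genuine obstruction, but this verification, rather than the reduction, is where the care is needed, since a purely black-box reduction from \emph{general} multiplicative $\ell_0$ estimation (where $H$ can be $o(n)$) to additive estimation is provably impossible.
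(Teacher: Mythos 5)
Your plan is, at its core, the same argument as the paper's: both rest on the fact that the hard one-way communication instances of \cite{jayram2013optimal} live at \emph{constant} normalized Hamming distance with an additive gap of order $\delta$, so an additive-$\delta$ estimator already solves them. The two places where your write-up stops short are exactly the places where the paper does concrete work. First, the property you defer (``the hard distribution has $H(\a,\b)=\Theta(n)$'') is not obtained by quoting a general multiplicative $\ell_0$ lower bound; the paper instead invokes Lemma~3.1 of \cite{jayram2013optimal} with explicit parameters, which produces YES/NO instances at normalized distance $\tfrac12(1-2\delta/3)$ versus $\tfrac12(1-\delta/3)$ on strings of length $n'=O(\delta^{-2}\log(1/\eps))$, together with the $\Omega(\delta^{-2}\log(1/\eps))$ communication bound for distinguishing them. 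So your deferred verification does pan out, but it is the substance of the proof rather than a routine check, and phrasing it as a reduction through multiplicative approximation adds nothing once you have the gap instances in hand.

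Second, your claim that ``no padding or change of instance size'' is needed is not right as stated. The hard distribution exists at the specific length $n'=\Theta(\delta^{-2}\log(1/\eps))$, whereas the lemma concerns arbitrary $n$ with only the hypothesis $\delta^{-2}\log(1/\eps)\lesssim n$; to get from $n'$ to $n$ you must scale the instance up, and the natural ``routine'' move of zero-padding alone would drive the normalized distance to $o(1)$ and destroy the constant-distance property your reduction relies on. The paper handles this by duplicating each coordinate $\lfloor n/n'\rfloor$ times and zero-padding only the remainder, so that less than half the coordinates are padding and the additive gap shrinks by at most a factor of two (whence the $\delta/24$ accuracy in their argument). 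This duplication step, or an equivalent statement of the \cite{jayram2013optimal} bound valid at every length $n$ with $H=\Theta(n)$, is the missing ingredient you would need to supply to make your proof complete.
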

\begin{proof}
  We apply Lemma~3.1 of~\cite{jayram2013optimal} with parameters
  $\alpha = 2$, $p = 1$, $b=1$, $\varepsilon = \delta$, and $\delta =
  \eps$.  This encodes inputs from a problem they prove is hard
  (augmented indexing on large domains) to inputs appropriate for
  Hamming estimation.  In particular, for $n' =
  O(\frac{1}{\delta^2}\log(1/\eps))$ it gives a distribution on $(\a,
  \b) \in \{0,1\}^{n'}\times \{0, 1\}^{n'}$ that are divided into ``NO''
  and ``YES'' instances, such that:
  \begin{itemize}
  \item From the reduction, distinguishing NO instances from YES
    instances with probability $1-\eps$ requires Alice to send $m =
    \Omega(\frac{1}{\delta^2}\log(1/\eps))$ bits of communication to Bob.
  \item In NO instances, $d_\cH(\a, \b) \geq \frac{1}{2}(1 - \delta/3)$.
  \item In YES instances, $d_\cH(\a, \b) \leq \frac{1}{2}(1 - 2\delta/3)$.
  \end{itemize}
  First, suppose $n=n'$.  Then since solving the additive Hamming distance
  estimation problem with $\delta/12$ accuracy would distinguish NO
  instances from YES instances, it must involve $m =
  \Omega(\frac{1}{\delta^2}\log(1/\eps))$ bits of communication.

  For $n > n'$, simply duplicate the coordinates of $a$ and $b$
  $\lfloor{n/n'}\rfloor$ times, and zero-pad the remainder.  Less than
  half the coordinates are then part of the zero-padding, so the gap
  between YES and NO instances remains at least $\delta/12$ and a
  protocol for the $(\delta/24, \eps)$ additive Hamming distance
  estimation problem requires $m =
  \Omega(\frac{1}{\delta^2}\log(1/\eps))$ as desired.
\end{proof}

With this in hand, we can prove Theorem~\ref{thm:oblivious lower bound}:
\begin{proof}[Proof of Theorem~\ref{thm:oblivious lower bound}]
  We reduce one-way communication of the $(\delta, \eps)$ additive
  Hamming distance estimation problem to the embedding problem.  Let
  $\a, \b \in \{0, 1\}^p$ be drawn from the hard instance for the
  communication problem defined in Lemma~\ref{l:hamming-hard}.
  Linearly transform them to $\u, \v \in \SSS^{p-1}$ via $\u =
  (2\cdot\a-\bm{1})/\sqrt{p}$, $\v = (2\cdot\b-\bm{1})/\sqrt{p}$.  We have that $\langle \u, \v \rangle =
  1 - 2d_\cH(\a, \b)$, so
  \[
  d(\u, \v) = 1 - \frac{\arccos(\langle \u, \v \rangle)}{\pi} = 1 - \frac{\arccos(1 - 2d_\cH(\a, \b))}{\pi}
  \]
  or
  \[
  d_{\cH}(\a, \b) = \frac{1}{2}(1-\cos(\pi - \pi d(\u, \v)))
  \]
  Given an estimate of $d(\u, \v)$, we can therefore get an estimate of
  $d_{\cH}(\a, \b)$.  In particular, since $\abs{\cos'(x)} \leq 1$, if
  we learn $d(\u, \v)$ to $\pm \delta$ then we learn $d_{\cH}(\a, \b)$ to
  $\pm \delta \frac{\pi}{2}$.

  For now, consider the case of $N = 2$.  Consider an oblivious
  embedding function $f: \mathbb{S}^{p-1} \to \{0, 1\}^m$ and
  reconstruction algorithm $g: \{0, 1\}^m \times \{0, 1\}^m \to \RR$
  that has
  \[
  \abs{g(f(\u), f(\v)) - d(\u, \v)} \leq \delta\frac{2}{\pi}
  \]
  with probability $1-\eps$ on the distribution of inputs $(\u,\v)$.  We
  can solve the one-way communication problem for Hamming distance
  estimation by Alice sending $f(\u)$ to Bob, Bob learning $d(\u, \v)
  \approx g(f(\u), f(\v))$, and then computing $d_\cH(\a, \b)$ to $\pm
  \delta$.  By the lower bound for this problem, any such $f$ and $g$
  must have $m = \Omega(\frac{1}{\delta^2} \log \frac{1}{\eps})$,
  proving the result for $N = 2$ (after rescaling $\delta$).

  For general $N$, we draw instances $(\u_1, \v_1), (\u_2, \v_2), \dotsc,
  (\u_{N/2}, \v_{N/2})$ independently from the hard instance for binary
  embedding of $N = 2$ and $\eps' = 4\eps/N$.  Consider an oblivious
  embedding function $f: \mathbb{S}^{p-1} \to \{0, 1\}^m$ and
  reconstruction algorithm $g: \{0, 1\}^m \times \{0, 1\}^m \to \RR$
  that has for all $i \in [N/2]$ that
  \[
  \abs{g(f(\u_i), f(\v_i)) - d(\u_i, \v_i)} \leq \delta
  \]
  with probability $1-\eps$ on this distribution.  Define $\alpha$ to
  be the probability that $\abs{g(f(\u_i), f(\v_i)) - d(\u_i, \v_i)} \leq
  \delta$ for any particular $i$.  Because $f$ and $g$ are oblivious
  and the different instances are independent, we have the probability
  that all instances succeed is $\alpha^{N/2} \geq 1 - \eps$, so
  \[
  \alpha > (1-\eps)^{2/N} > 1- 4\eps/N.
  \]
  In particular, this means $f$ and $g$ solve the hard instance of
  binary embedding and $N = 2$, $\eps' = 4\eps/N$.  By the above lower
  bound for $N=2$, this means
  \[
  m = \Omega(\frac{1}{\delta^2} \log (N/\eps))
  \]
  as desired.
\end{proof}

\subsection{Proof of Data-Dependent Lower Bound (Theorem \ref{thm:lower bound})}\label{proof:thm:lower bound}

We need a few ingredients to show the lower bound. First, we define a matrix that is close to identity matrix.
\begin{definition} ($(\delta_1, \delta_2)$-near identity matrix)
Symmetric matrix $\Mb \in \mathbb{R}^{p\times p}$ is called a $(\delta_1, \delta_2)$-near identity matrix if it satisfies both of the following conditions:
\[
	1 - \delta_1 \leq  \Mb_{i,i}  \leq 1, \forall \; i \in [p],
\]
\[
	\big|\Mb_{i,j}\big| \leq \delta_2, \forall \; i \ne j \in [p].
\]
\end{definition}

Next we give a lower bound on the rank of $(\delta_1,\delta_2)$-near identity matrix.
\begin{lemma} \label{weak_rank_bound}
	Suppose  positive semidefinite matrix $\Mb \in \mathbb{R}^{p\times p}$ is a $(\delta_1,\delta_2)$-near identity matrix with rank $d$, and $0 < \delta_1,\delta_2 < 1$. Then we have 
	\[
		d \geq \frac{p(1-\delta_1)^2}{1 + (p-1)\delta_2^2}.
	\]
\end{lemma}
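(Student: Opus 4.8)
The plan is to use the classical rank-versus-coherence argument based on the Cauchy--Schwarz inequality applied to the eigenvalues of $\Mb$. Since $\Mb$ is positive semidefinite with rank $d$, it has at most $d$ nonzero eigenvalues $\lambda_1, \dots, \lambda_d \geq 0$, and Cauchy--Schwarz gives
\[
	(\Tr \Mb)^2 = \Big(\sum_{k=1}^d \lambda_k\Big)^2 \leq d \sum_{k=1}^d \lambda_k^2 = d \cdot \Tr(\Mb^2).
\]
So it suffices to lower bound $\Tr \Mb$ and upper bound $\Tr(\Mb^2)$ using the near-identity structure.

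For the lower bound, the diagonal condition $\Mb_{i,i} \geq 1 - \delta_1$ directly yields $\Tr \Mb = \sum_{i=1}^p \Mb_{i,i} \geq p(1-\delta_1)$, hence $(\Tr \Mb)^2 \geq p^2(1-\delta_1)^2$. For the upper bound, since $\Mb$ is symmetric, $\Tr(\Mb^2) = \sum_{i,j} \Mb_{i,j}^2 = \sum_i \Mb_{i,i}^2 + \sum_{i \neq j} \Mb_{i,j}^2$; using $\Mb_{i,i} \leq 1$ on the $p$ diagonal terms and $|\Mb_{i,j}| \leq \delta_2$ on the $p(p-1)$ off-diagonal terms gives $\Tr(\Mb^2) \leq p + p(p-1)\delta_2^2 = p\big(1 + (p-1)\delta_2^2\big)$.

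Combining the three displays,
\[
	p^2(1-\delta_1)^2 \leq (\Tr \Mb)^2 \leq d \cdot \Tr(\Mb^2) \leq d \cdot p\big(1 + (p-1)\delta_2^2\big),
\]
and dividing by $p\big(1 + (p-1)\delta_2^2\big) > 0$ yields $d \geq \frac{p(1-\delta_1)^2}{1+(p-1)\delta_2^2}$, as claimed.

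There is essentially no serious obstacle here; the only ``idea'' is recognizing that the rank enters through the Cauchy--Schwarz bound $(\Tr \Mb)^2 \leq \rank(\Mb)\,\Tr(\Mb^2)$ for PSD matrices, after which everything is a one-line estimate from the definition of a $(\delta_1,\delta_2)$-near identity matrix. One should just double-check that $d \geq 1$ is implied (it is, since $\delta_1 < 1$ forces $\Tr \Mb > 0$ and hence $\Mb \neq 0$), so the rank is well-defined and the bound is non-vacuous.
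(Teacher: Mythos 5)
Your argument is correct and matches the paper's own proof essentially step for step: both bound $\Tr\Mb \geq p(1-\delta_1)$, bound $\sum_i \lambda_i^2 = \|\Mb\|_F^2 \leq p + p(p-1)\delta_2^2$ via the near-identity entries, and combine them through the Cauchy--Schwarz inequality $(\sum_i \lambda_i)^2 \leq d \sum_i \lambda_i^2$. Nothing further is needed.
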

\begin{proof} 
	We postpone the proof to Appendix \ref{proof:lem:weak_rank_bound}.
\end{proof}
The above result is weak when it is applied to show our desired lower bound. We still need to make use of the following combinatorial result.

\begin{lemma}  \label{boost_lower_bound}
Suppose matrix $\Mb \in \mathbb{R}^{p\times p}$ has rank $d$. Let $P(x)$ be any degree $k$ polynomial function. Consider matrix $\Nb \in \mathbb{R}^{p\times p}$ defined as $\Nb := P(\Mb)$,
where the $\Nb_{i,j} = P(\Mb_{i,j})$. We have
\[
	\rank(\Nb) \leq {k+d \choose k}.
\]
\end{lemma}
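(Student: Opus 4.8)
The plan is to combine a rank factorization of $\Mb$ with the multinomial expansion of $P$: the underlying fact is that an entrywise polynomial of a rank-$d$ matrix is a sum of rank-one matrices, one for each monomial of degree at most $k$ in $d$ variables. First I would use $\rank(\Mb) = d$ to write $\Mb = \Ub\Vb^{\top}$ with $\Ub,\Vb \in \mathbb{R}^{p\times d}$, and denote by $\u_i,\v_i \in \mathbb{R}^{d}$ the $i$-th rows of $\Ub,\Vb$, so that $\Mb_{i,j} = \langle \u_i,\v_j\rangle$. (The degenerate case $d = 0$ gives $\Mb = \zero$ and $\Nb = P(0)\,\one\one^{\top}$, of rank at most $1 = \binom{k}{k}$, so I may assume $d\ge 1$.)

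Next, writing $P(x) = \sum_{\ell=0}^{k} c_\ell x^{\ell}$ and applying the multinomial theorem to each $\langle \u_i,\v_j\rangle^{\ell}$, I would obtain
\[
	\Nb_{i,j} \;=\; P\big(\langle \u_i,\v_j\rangle\big) \;=\; \sum_{\ell=0}^{k} c_\ell \!\!\sum_{\substack{\alpha \in \mathbb{Z}_{\ge 0}^{d} \\ |\alpha| = \ell}}\!\! \binom{\ell}{\alpha}\, \u_i^{\alpha}\,\v_j^{\alpha} \;=\; \sum_{\substack{\alpha \in \mathbb{Z}_{\ge 0}^{d} \\ |\alpha| \le k}} b_{\alpha}\, \u_i^{\alpha}\,\v_j^{\alpha},
\]
where $\u^{\alpha} := \prod_{t=1}^{d}(u_t)^{\alpha_t}$ and $b_\alpha$ is a scalar built from $c_\ell$ and the multinomial coefficients. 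For each $\alpha$ with $|\alpha|\le k$, setting $\w_{\alpha},\z_{\alpha} \in \mathbb{R}^{p}$ to have $i$-th entries $\u_i^{\alpha}$ and $\v_i^{\alpha}$ respectively gives $\Nb = \sum_{|\alpha|\le k} b_\alpha\, \w_{\alpha}\z_{\alpha}^{\top}$, a sum of rank-$\le 1$ matrices. Hence $\rank(\Nb)$ is bounded by the number of $\alpha\in\mathbb{Z}_{\ge0}^d$ with $|\alpha|\le k$, which a stars-and-bars argument (introduce a slack coordinate so that $d+1$ nonnegative integers sum to $k$) shows equals $\binom{k+d}{d} = \binom{k+d}{k}$, the claimed bound.

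I do not expect a genuine obstacle here; the only items needing care are the $d = 0$ edge case above and the elementary count of monomials of degree at most $k$ in $d$ variables. If one prefers to avoid the closed-form count, one can argue by induction on $k$ using $x^{\ell} = x\cdot x^{\ell-1}$ together with the Hadamard-product rank bound $\rank(\A\circ\B)\le\rank(\A)\,\rank(\B)$, but the monomial-counting route yields the stated constant directly.
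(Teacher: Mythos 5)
Your argument is correct: the rank factorization $\Mb=\Ub\Vb^{\top}$ combined with the multinomial expansion exhibits $\Nb$ as a sum of at most $\binom{k+d}{k}$ rank-one matrices, and the monomial count and $d=0$ edge case are handled properly. This is essentially the same argument as the one the paper defers to (Lemma~9.2 of \citet{alon2003problems}, which expands each row of $\Nb$ in the monomials of degree at most $k$ in a rank-$d$ basis), so you have simply supplied the details the paper cites rather than reproves.
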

\begin{proof}
	See Lemma 9.2 of \cite{alon2003problems} for a detailed proof.
\end{proof}

Now we are ready to prove Theorem \ref{thm:lower bound}.
\begin{proof}[Proof of Theorem~\ref{thm:lower bound}]
	Let $\e_i$ denote the $i$'th natural basis of $\mathbb{R}^{N}$, i.e., the $i$'th coordinate is $1$ while the rest are all zeros. Consider $N$ points $\{\e_1,\e_2,...,\e_N\}$ and their opposite vectors $\{-\e_1,-\e_2,...,-\e_N\}$. For any binary embedding algorithm $f$, we let
	\[
		\b_i := f(\e_i), \; \forall \; i \in [N],
	\]
	\[
		\c_i := f(-\e_i), \; \forall \; i \in [N].
	\]

Under the condition that $f$ solves the general binary embedding problem with link function $\cL$, we have
\begin{equation} \label{bc}
	\big| d_{\mathcal{H}}(\b_i, \c_i) - \cL\big(d(\e_i,-\e_i)\big) \big| \leq  \delta, \forall \; i \in [N]. 
\end{equation}
As $d(\e_i, -\e_i) = 1$, we have
\begin{equation} \label{tmp22}
	\cL(1) + \delta \geq d_{\mathcal{H}}(\b_i, \c_i) \geq \cL(1) - \delta.
\end{equation}
Similarly, note that 
\begin{equation*}
	d(\e_i,\e_j) = d(\e_i,-\e_j) = d(-\e_i,-\e_j) = \frac{1}{2},\;\forall \;i \ne j,
\end{equation*}
we have $\forall \;i \ne j$
\begin{equation} \label{bb}
	\cL(1/2) - \delta \leq d_{\mathcal{H}}(\b_i, \b_j) \leq \cL(1/2) + \delta, 
\end{equation}
\begin{equation} \label{cc}
	\cL(1/2) - \delta \leq d_{\mathcal{H}}(\c_i, \c_j) \leq \cL(1/2) + \delta,
\end{equation}
\begin{equation} \label{bc2}
	\cL(1/2) - \delta \leq d_{\mathcal{H}}(\b_i, \c_j) \leq \cL(1/2) + \delta.
\end{equation}

From now on, we treat binary strings $\b_i,\c_i$ as vectors in $\RR^{m}$.  Let $\mathbf{B}$ denote the matrix with rows $\b_i$ and $\mathbf{C}$ denote the matrix with rows $\c_i$. Consider the outer product of the difference between $\mathbf{B}$ and $\mathbf{C}$, namely
\[
	\mathbf{M} = (\mathbf{B} - \mathbf{C})(\mathbf{B} - \mathbf{C})^{\top}.
\]
Note that  $\forall \; i \in [N]$,
\[
	\Mb_{i,i}  = \| \b_i - \c_i \|_2^2  =  4m\cdot d_{\mathcal{H}} (\b_i, \c_i)  \geq 4m\big(\cL(1)-\delta\big).
\]
The last inequality follows from (\ref{tmp22}).
For $\forall\; i \ne j$, we have
\begin{align*}
	\Mb_{i,j} & = \big\langle \b_i - \c_i, \b_j - \c_j \big\rangle  = \big\langle \b_i, \b_j\big\rangle + \big\langle \c_i, \c_j\big\rangle - \big\langle \b_i, \c_j\big\rangle - \big\langle \b_j, \c_i\big\rangle \\
	& = 2m\bigg(  d_{\mathcal{H}} (\b_i,\c_j) + d_{\mathcal{H}} (\b_j,\c_i) - d_{\mathcal{H}} (\b_i,\b_j) - d_{\mathcal{H}} (\c_i,\c_j) \bigg),
\end{align*}
where the third equality follows from 
\[
	d_{\mathcal{H}}(\b, \c) = \frac{1}{4m} \big( \|\b\|_2^2 + \|\c\|_2^2 - 2\big\langle \b, \c\big\rangle \big) \; \forall \; \b, \c \in \{-1,1\}^m 
\]
By using (\ref{bb}) to (\ref{bc2}), we have
\[
	\big|\M_{i,j}\big| \leq 8\delta m.
\] 
Therefore, $\frac{1}{4m \cdot (\cL(1)+\delta)}\Mb$ is actually a $\bigg(\frac{2\delta}{\cL(1)},\frac{2\delta}{\cL(1)}\bigg)$-near identity matrix. Consider degree $k$ polynomial $P(z) = z^k$. Let \[
\Nb = P\big(\frac{1}{4m \cdot \cL(1)}\Mb \big).
\]

It is easy to observe that $\Nb$ is a $(\gamma_1,\gamma_2)$-near identity matrix where
\[
	\gamma_1 = 1 - (1 - \frac{2\delta}{\cL(1)})^k,
\]
and
\[
	\gamma_2 = \big(\frac{2\delta}{\cL(1)}\big)^k.
\]
Under the condition $\frac{\delta}{\cL(1)} \leq \frac{1}{4}$, we have
\[
	\gamma_1 = 1 - (1 - \frac{\delta}{\cL(1)})^k \leq 1 - (\frac{1}{2})^{k}.
\]
By setting $k = \frac{1}{2}\frac{\log{N}}{\log{\frac{\cL(1)}{2\delta}}} $, we have
\[
	\gamma_2 \leq \sqrt{\frac{1}{N}}.
\]
We apply Lemma \ref{weak_rank_bound} by setting $\delta_1,\delta_2, p$ in the statement to be $\gamma_1,\gamma_2, N$ respectively. We get
\begin{equation} \label{lb}
	\rank(\Nb) \geq \frac{N(\frac{1}{4})^{k}}{1 + (N-1)/N} \geq \frac{1}{2}(\frac{1}{4})^kN \geq (\frac{1}{8})^kN.
\end{equation}
On the other hand, $\frac{1}{4m \cdot \cL(1)}\Mb$ has rank at most $m$. By applying Lemma \ref{boost_lower_bound} we get 
\[
	\rank(\mathbf{\Nb}) \leq {m+k \choose k} \leq \big(\frac{e(m+k)}{k}\big)^k.
\]
Applying the above result and (\ref{lb}) directly yields that
\[
	(N)^{1/k} \leq 8e\frac{m+k}{k}.
\]
When  $k =  \frac{1}{2}\frac{\log{N}}{\log{\frac{\cL(1)}{2\delta}}} $ as we set, $N^{1/k} \geq (\frac{\cL(1)}{2\delta})^2$. Therefore we have
\[
	m \geq \frac{1}{32e}\big(\frac{\cL(1)}{\delta}\big)^2k - k \geq \frac{1}{64e}\big(\frac{\cL(1)}{2\delta}\big)^2k = \frac{1}{128e}\big(\frac{\cL(1)}{\delta}\big)^2\frac{\log{N} }{\log{\frac{\cL(1)}{2\delta}}},
\]
where the second inequality holds when $\big(\frac{\cL(1)}{2\delta}\big)^2 \geq 64e$.
\end{proof}

\subsection{Proofs about Fast Binary Embedding Algorithm}
\subsubsection{Proof of Lemma \ref{lem:pairwise_independent}} \label{proof:lem:pairwise_indepedent}
\begin{proof}
It suffices to prove $X \bot Y'$. One can check similarly that the proof holds for the remaining three results. Note that $X, Y'$ are binary random variables with values $\{-1,1\}$. It is easy to observe both of them are balanced, namely $\Pr(X = 1) = \Pr( Y' = 1) = 1/2$. If $X \bot Y'$, then we have $\Pr(X = Y') = 1/2$. In the reverse direction, suppose $\Pr(X = Y') = 1/2$. First we have
\begin{equation}
	\label{eq1}
	\Pr(X = 1) = \Pr(X = 1, Y' = 1) + \Pr(X = 1, Y' = -1) = 1/2,
\end{equation}
\begin{equation}
	\label{eq2}
	\Pr(Y' = 1) = \Pr(X = 1, Y' = 1) + \Pr(X = -1, Y' = 1) = 1/2.
\end{equation}
Combining the above two results, we have $\Pr(X = 1, Y' = -1) = \Pr(X = -1, Y' = 1)$. Using $ \Pr(X = 1, Y' = -1) + \Pr(X = - 1, Y' = 1) =\Pr(X \ne Y') = 1 - \Pr(X = Y') = \frac{1}{2}$, we thus have $\Pr(X = 1, Y' = -1) = \Pr(X = - 1, Y' = 1) = 1/4$. Plugging the above result into (\ref{eq1}) and (\ref{eq2}) we have $\Pr(X = 1, Y' = 1) = \Pr(X = -1, Y' = -1) = 1/4$.
Thus we have shown 
\[
	\Pr(X = v \big| Y' = u) = \frac{\Pr(X = v, Y' = u)}{\Pr(Y' = u)} = \Pr(X = v), \; \forall \; u,v \in \{-1,1\},
\]
which leads to $X \bot Y'$.

Using the above arguments, we show that $X \bot Y'$ if and only if
\[
	\Pr(X = Y') = 1/2.
\]
Recalling the definition of $X,Y'$, the above condition holds if and only if 
\[
	\Pr\bigg\{ \underbrace{\big\langle \bxi\odot\bzeta, \x\big\rangle \cdot \big\langle \bxi'\odot\bzeta, \y \big\rangle}_{Z} \; \geq \;0\bigg\} = \frac{1}{2}.
\]
Next we prove $Z$ has symmetric distribution around $0$. Let $\cI = [1,n], \cI' = [1,n-\Delta], \cI_0 = [2n-\Delta,2n-1]$ for some natural number $\Delta < n$. Without loss of generality, we assume $\bxi = \g_{\cI}$ and $\bxi' = [\g_{\cI_0};\g_{\cI'}]$.  We split $\cI$ into $T = \lceil \frac{n}{\Delta}\rceil$ consecutive disjoint subsets $\cI_1, \cI_2, \ldots, \cI_{T}$ each of which has size $\Delta$ except $| \cI_{T}| = n - (T-1)\Delta \leq \Delta$. Also, let $\cI_{T-1}'$ contain the first $n - (T-1)\Delta$ entries of $\cI_{T-1}$. Then we have
\begin{equation} \label{Z_formula}
	Z = \bigg( \sum_{i=1}^{T} \big\langle \g_{\cI_i}\odot \bzeta_{\cI_i}, \x_{\cI_i}\big\rangle\bigg) \cdot \bigg( \sum_{i=1}^{T-2}\big\langle \g_{\cI_i}\odot \bzeta_{\cI_{i+1}}, \y_{\cI_{i+1}}\big\rangle  + \big\langle \g_{\cI_{T-1}'}\odot \bzeta_{\cI_{T}}, \y_{\cI_{T}}\big\rangle +\big\langle \g_{\cI_0}\odot \bzeta_{\cI_1}, \y_{\cI_1}\big\rangle\bigg).
\end{equation}
We now let $\widehat{\g}$ be such random vector that is identical to $\g$ except that for any $i \in \{0\}\cup[T]$
\[
	\widehat{\g_{\cI_i}} = -\g_{\cI_i}, \;\text{if} \; i  \bmod 2 = 0
\]
Let $\widehat{\bzeta}$ be such random vector that is identical to $\bzeta$ except that for any $i \in \{0\}\cup[T]$
\[
	\widehat{\bzeta_{\cI_i}} = -\bzeta_{\cI_i}, \;\text{if} \; i \bmod 2 = 1.
\]
Replacing $\g, \;\bzeta$ in (\ref{Z_formula}) with $\widehat{\g}, \;\widehat{\bzeta}$ yields
\begin{align*}
	\widehat{Z} \\
	= & \bigg( \sum_{i=1}^{T} \big\langle \widehat{\g}_{\cI_i}\odot \widehat{\bzeta}_{\cI_i}, \x_{\cI_i}\big\rangle\bigg) \cdot \bigg( \sum_{i=1}^{T-2}\big\langle \widehat{\g}_{\cI_i}\odot \widehat{\bzeta}_{\cI_{i+1}}, \y_{\cI_{i+1}}\big\rangle  + \big\langle \widehat{\g}_{\cI_{T-1}'}\odot \widehat{\bzeta}_{\cI_{T}}, \y_{\cI_{T}}\big\rangle +\big\langle \widehat{\g}_{\cI_0}\odot \widehat{\bzeta}_{\cI_1}, \y_{\cI_1}\big\rangle\bigg) \\
	=& \bigg( - \sum_{i=1}^{T} \big\langle \g_{\cI_i}\odot \bzeta_{\cI_i}, \x_{\cI_i}\big\rangle\bigg) \cdot \bigg( \sum_{i=1}^{T-2}\big\langle \g_{\cI_i}\odot \bzeta_{\cI_{i+1}}, \y_{\cI_{i+1}}\big\rangle  + \big\langle \g_{\cI_{T-1}'}\odot \bzeta_{\cI_{T}}, \y_{\cI_{T}}\big\rangle +\big\langle \g_{\cI_0}\odot \bzeta_{\cI_1}, \y_{\cI_1}\big\rangle\bigg) \\
	= & -Z.
\end{align*}
As each entry of $\g$ is symmetric random variable around $0$, therefore $\widehat{\g}$ and $\g$ has the same probability distribution. The same fact also holds for $\widehat{\bzeta}$ and $\bzeta$. So we conclude that $Z$ has symmetric distribution around $0$, which implies $\Pr(Z > 0) = \frac{1}{2}$ and $X \bot Y'$.
\end{proof}

\subsubsection{``Proof'' of Theorem \ref{thm:fbe}} \label{proof:thm:fbe}

\begin{proof}
Unspecified notations in this section are consistent with Algorithm \ref{alg:fbe}. Using Lemma \ref{lem:followFastJL}, we have 
\begin{equation} \label{fastJLinequality}
	\Pr\bigg\{\sup_{i,j \in [N]}\big| d(\y_i, \y_j) - d(\x_i,\x_j) \big|  \geq C\delta \bigg\} \leq 0.01.
\end{equation}
Now consider the first-block binary codes generated from Gaussian Toeplitz projection. We focus on two intermediate points $\y_1$ and $\y_2$. Consider the first block of binary codes generated from the second part of Algorithm \ref{alg:fbe}. We let 
\[
	\u = \sign\big(\Psi^{(1)} \cdot \y_1\big), \v = \sign\big(\Psi^{(1)} \cdot \y_2\big).
\]
Suppose $\Psi^{(1)}$ contains Gaussian Toeplitz matrix $\Tb$. For any $i \in [m/B]$, we have
\[
	u_i = \sign\big(\big\langle \Tb_i \odot\bzeta, \;\y_1\big\rangle\big) = \sign\big(\big\langle \Tb_i, \; \y_1\odot\bzeta\big\rangle\big).
\]
\[
	v_i = \sign\big(\big\langle \Tb_i\odot\bzeta, \; \y_2\big\rangle\big) = \sign\big( \big\langle \Tb_i, \;\y_2\odot\bzeta\big\rangle\big).
\]
Since $\Tb_i$ is a Gaussian random vector, we have
\[
	\Pr(u_i \ne v_i) = d(\y_1\odot\bzeta,\; \y_2\odot\bzeta) = d(\y_1,\; \y_2).
\]
Let $Z_i = \mathds{1}\big(u_i \ne v_i\big), \forall \; i \in [m/B]$. Following Lemma (\ref{lem:pairwise_independent}), we know that $\forall \; i \ne j$
\[
	u_i \bot u_j, \; u_i \bot v_j, \; v_i \bot v_j, \; v_i \bot u_j. 
\]
Therefore $\{Z_i\}_{i=1}^{[m/B]}$ is a pair-wise independent sequence. \textbf{[This implication is incorrect, as observed by~\cite{dirksen2018fast}.]}
By Markov's inequality, we have 
\begin{equation} \label{markov}
	\Pr\bigg(\big|\frac{1}{m/B} \sum_{i=1}^{m/B} Z_i - \mathbb{E}(Z_1)\big| \geq \delta \bigg) \leq \frac{\frac{B}{m}Var(Z_1)}{\delta^2} \leq \frac{1}{4}\frac{B}{m\delta^2} \leq \frac{1}{4}.
\end{equation}
The last inequality holds by setting $\frac{m}{B} \geq \frac{1}{\delta^2}$. Therefore, we have
\[
	\Pr\bigg(\big|d_{\cH}(\u,\v) - d(\y_1, \y_2)\big| \geq \delta \bigg) \leq \frac{1}{4}. 
\]

Now consider total $B$ block binary codes $\{\u_i\}_{i=1}^{B}$ $\{\v_i\}_{i=1}^{B}$ from $\y_1$ and $\y_2$ respectively. Let
\[
	E_i = \mathds{1}\big(|d_{\cH}(\u_i, \v_i) - d(\y_1, \y_2)| \geq \delta\big), \;\forall \; i \in [B].
\]
From (\ref{markov}), we have $\Pr(E_i = 1) < \frac{1}{4}$. If more than half of $E_i$ are $0$, then the median of $\{d_{\cH}(\u_i, \v_i)\}_{i=1}^{B}$ is within $\delta$ away from $d(\y_1, \y_2)$. Then we have
\begin{align*}
	& \Pr\bigg(\median\big(\{d_{\cH}(\u_i, \v_i)\}_{i=1}^{B}\big) - d(\y_1, \y_2)\big| \geq \delta\bigg) \\
	& \leq  \Pr\big(\frac{1}{B}\sum_{i=1}^{B}E_i \geq \frac{1}{2}\big) 
	\leq \Pr\big(\frac{1}{B}\sum_{i=1}^{B}E_i - \mathbb{E}(E_i) > \frac{1}{4}\big) \leq \exp(-\frac{1}{4}B). 
\end{align*}
In the second inequality, we use (\ref{markov}). The last step follows from Hoeffding's inequality. Now we use a union bound for $N^2$ pairs 
\[
	\Pr\bigg( \sup_{i,j\in[N]}\big|d_{\cH}(\b_i,\b_j) - d(\y_i,\y_j) \big| \geq \delta\bigg) \leq N^2\exp(-\frac{1}{4}B) \leq \exp(-\frac{1}{8}B).
\]
The last inequality holds by setting $B \geq 16\log N$. Combing the above result and (\ref{fastJLinequality}) using triangle inequality, we complete the proof.
\end{proof}

\subsection{Proof of Theorem \ref{thm:general_K}} \label{proof:thm:general_K}

For any set $K \subseteq \SSS^{p-1}$, we use $\cN_{\delta}(K)$ to denote a constructed $\delta$-net of $K$, which is a $\delta$-covering set with minimum size. In particular, by Sudakov's theorem (e.g., Theorem 3.18 in \citet{ledoux1991probability})
\[
	\log \cN_{\delta}(K) \lesssim \frac{w(K)^2}{\delta^2}.
\] 

We first prove that for a fixed two dimensional space, $m = O(\frac{1}{\delta^2})$ independent Gaussian measurements are sufficient to achieve $\delta$-uniform binary embedding.
\begin{lemma} \label{lem:2dimension_space}
	Suppose $K$ is any fixed two-dimensional subspace in $\SSS^{p-1}$. Let $\Ab\in \RR^{m \times p}$ be a matrix with independent rows $\Ab_i \sim \cN(0, \Ib_p),\;\forall i \in [m]\;$. Suppose $m \geq \frac{1}{\delta^2}\log{\frac{1}{\delta}}$, then with probability at least $1 - 3\exp(-\delta^2m)$,
	\begin{equation} \label{2dimension_space}
		\sup_{\x,\y \in K} \big|d_{\Ab}(\x,\y) - d(\x,\y)\big| \leq C\delta.
	\end{equation}
	Here $C$ is some absolute constant.
\end{lemma}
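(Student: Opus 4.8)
The plan is to reduce the statement to a one‑dimensional (circular) empirical‑process estimate and then handle it with a net together with a monotonicity gadget that upgrades pointwise concentration to uniformity over the infinite set $K$.

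\textbf{Step 1: reduction to the circle.} Write $K = V \cap \SSS^{p-1}$ for a two‑dimensional linear subspace $V$; fix an orthonormal basis of $V$ and identify $K$ with the unit circle, writing $\x_\theta$ for the point at angle $\theta$. Since $\langle \Ab_i, \x_\theta\rangle$ depends on $\Ab_i$ only through its projection onto $V$, and that projection is a standard $\cN(0,\Ib_2)$ vector, the angles $\Theta_1,\dots,\Theta_m$ of the projected rows are i.i.d.\ uniform on $[0,2\pi)$. Hence for any $\alpha,\beta$ the indicator $Z_i(\alpha,\beta):=\mathbbm{1}\{\sign\langle\Ab_i,\x_\alpha\rangle\ne\sign\langle\Ab_i,\x_\beta\rangle\}$ is Bernoulli with mean $d(\x_\alpha,\x_\beta)$, the $Z_i$ are independent over $i$, and $Z_i(\alpha,\beta)$ is \emph{monotone} under nesting of the short arc spanned by $\{\alpha,\beta\}$. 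I will also use repeatedly that $d$ (normalized geodesic distance) and $d_\Ab$ ($\tfrac1m$ times a Hamming distance) are metrics, hence obey the triangle inequality.

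\textbf{Step 2: concentration on a net.} Let $\cN$ be a $\delta$‑net of the circle, so $|\cN|=O(1/\delta)$. For each fixed pair $\x,\y\in\cN$, Hoeffding's inequality applied to $d_\Ab(\x,\y)=\tfrac1m\sum_i Z_i(\x,\y)$ gives $\Pr(|d_\Ab(\x,\y)-d(\x,\y)|>\delta)\le 2e^{-2\delta^2 m}$; a union bound over the $O(1/\delta^2)$ pairs shows the event $\cE_1:=\{|d_\Ab(\x,\y)-d(\x,\y)|\le\delta\ \text{for all }\x,\y\in\cN\}$ fails with probability $\le C\delta^{-2}e^{-2\delta^2 m}$.

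\textbf{Step 3: uniform control on short arcs (the main obstacle).} The real work is promoting $\cE_1$ from the net to all of $K$. Cover the circle by $L=O(1/\delta)$ arcs $J_1,\dots,J_L$ of length $O(\delta)$ such that every arc of length $\le\delta$ is contained in some $J_k$. Say row $i$ ``touches'' $J_k$ if $\theta\mapsto\sign\langle\Ab_i,\x_\theta\rangle$ is non‑constant on $J_k$; the number $N_k$ of such rows is $\mathrm{Bin}(m,q_k)$ with $q_k=O(\delta)$, and the monotonicity from Step~1 yields
\[
\sup_{\x,\x'\in K,\ d(\x,\x')\le\delta}\ d_\Ab(\x,\x')\ \le\ \max_{k\in[L]} N_k/m .
\]
A Hoeffding (or Chernoff) bound gives $\Pr(N_k/m>C'\delta)\le e^{-2\delta^2 m}$, so a union bound over $k$ shows $\cE_2:=\{\max_k N_k/m\le C'\delta\}$ fails with probability $\le C\delta^{-1}e^{-2\delta^2 m}$.

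\textbf{Step 4: assembling the bound.} On $\cE_1\cap\cE_2$, fix arbitrary $\x,\y\in K$ and pick $\x',\y'\in\cN$ with $d(\x,\x'),d(\y,\y')\le\delta$. By the triangle inequality for the Hamming metric, $|d_\Ab(\x,\y)-d_\Ab(\x',\y')|\le d_\Ab(\x,\x')+d_\Ab(\y,\y')\le 2C'\delta$ using $\cE_2$; by $\cE_1$, $|d_\Ab(\x',\y')-d(\x',\y')|\le\delta$; and by the triangle inequality for $d$, $|d(\x',\y')-d(\x,\y)|\le 2\delta$. Summing gives $|d_\Ab(\x,\y)-d(\x,\y)|\le C\delta$. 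Finally, for $m\ge\delta^{-2}\log(1/\delta)$ (with the implicit absolute constants chosen appropriately) the total failure probability $C\delta^{-2}e^{-2\delta^2 m}+C\delta^{-1}e^{-2\delta^2 m}$ is at most $3e^{-\delta^2 m}$, which is the claim. I expect Step~3 to be the crux: a plain net argument controls only the net points, and one needs the ``arc‑touching'' observation plus the monotonicity of $Z_i$ to bound the perturbation $d_\Ab(\x,\x')$ \emph{uniformly} over the infinitely many close pairs; it is precisely the union bound over the $\Theta(1/\delta)$–$\Theta(1/\delta^2)$ events in Steps~2–3 against a per‑event failure $e^{-\Omega(\delta^2 m)}$ that forces the extra $\log(1/\delta)$ in the requirement on $m$.
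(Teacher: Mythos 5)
Your proposal is correct and follows essentially the same route as the paper's proof: a $\delta$-net on the circle with Hoeffding plus a union bound for net pairs, a uniform bound on $d_{\Ab}(\x,\x')$ for nearby pairs via a $\mathrm{Bin}(m,O(\delta))$ count of rows whose sign boundary falls near the net (union-bounded over $O(1/\delta)$ objects), and the same triangle-inequality assembly. The only cosmetic difference is that the paper phrases the local count as the fraction of rows with small margin $|\langle \Ab_i,\x_1\rangle|\le\delta$ at net points, whereas you count rows whose sign function is non-constant on fixed $O(\delta)$-length covering arcs; these are the same uniform-tessellation argument.
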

\begin{proof}
	We postpone the proof to Appendix \ref{proof:2dimension_space}.
\end{proof}

The next lemma shows that the normalized $\ell_1$ norm of $\Ab \x$ provides decent approximation of $\|\x\|_2$.
\begin{lemma} \label{lem:l1}
	Consider any set $K \subseteq \RR^{p}$. Let $\Ab$ be an $m$-by-$p$ matrix with independent rows $\Ab_i \sim \cN(0, \Ib_p)$ for any $i \in [m]\;$. Consider
	\[
		Z = \sup_{\x\in K}\bigg|\frac{1}{m}\sum_{i=1}^m \big|\big\langle \Ab_i,\x\big\rangle\big| - \sqrt{\frac{2}{\pi}}\|\x\|_2\bigg|.
	\]
	We have
	\[
		\Pr\big\{ Z \geq 4\frac{w(K)}{\sqrt{m}} + t\big\} \leq 2\exp\big(-\frac{mt^2}{2d(K)^2}\big), \;\forall \;t > 0.
	\]
where $d(K) = \max_{\x \in K} \|\x\|_2$.
\end{lemma}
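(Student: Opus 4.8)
The plan is to recognize $Z$ as the supremum of a centered empirical process indexed by $K$ and to bound it in two independent pieces: a concentration-of-measure estimate controlling the fluctuation of $Z$ around $\EE Z$, and an upper bound on $\EE Z$ in terms of $w(K)$ obtained via symmetrization and contraction. The observation that launches everything is that $\langle \Ab_i, \x\rangle \sim \cN(0, \|\x\|_2^2)$, so $\EE |\langle \Ab_i, \x\rangle| = \sqrt{2/\pi}\,\|\x\|_2$; hence the quantity inside the supremum is genuinely centered and $Z = \sup_{\x \in K}\big|\tfrac1m\sum_i (|\langle \Ab_i,\x\rangle| - \EE|\langle \Ab_i,\x\rangle|)\big|$.

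For the concentration step I would view $Z$ as a function $F$ of the Gaussian matrix $\Ab \in \RR^{m\times p}$, regarded as a vector of $mp$ i.i.d.\ standard normals. For each fixed $\x$, the map $\Ab \mapsto \tfrac1m\sum_i|\langle \Ab_i,\x\rangle| - \sqrt{2/\pi}\,\|\x\|_2$ has gradient of Euclidean norm at most $\|\x\|_2/\sqrt m \le d(K)/\sqrt m$, so it is $d(K)/\sqrt m$-Lipschitz in $\Ab$; applying the absolute value and then taking the supremum over $\x \in K$ preserves this Lipschitz constant. The Gaussian concentration inequality then yields $\Pr\{|Z - \EE Z| \ge t\} \le 2\exp(-mt^2/(2 d(K)^2))$, which is exactly the claimed tail once we establish $\EE Z \le 4 w(K)/\sqrt m$, since $\Pr\{Z \ge 4w(K)/\sqrt m + t\} \le \Pr\{Z \ge \EE Z + t\} \le \Pr\{|Z-\EE Z|\ge t\}$.

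For the expectation bound I would first apply the standard symmetrization inequality to pass from $\EE Z$ to $\tfrac2m\,\EE \sup_{\x\in K}\big|\sum_i \varepsilon_i |\langle \Ab_i,\x\rangle|\big|$ with $\varepsilon_i$ i.i.d.\ Rademacher, then invoke the Ledoux–Talagrand contraction principle (the map $t\mapsto |t|$ is $1$-Lipschitz and vanishes at $0$) to remove the inner absolute value at the cost of a factor $2$, reaching $\tfrac4m\,\EE\sup_{\x\in K}\big|\langle \sum_i\varepsilon_i\Ab_i,\x\rangle\big|$. Finally, because each $\Ab_i$ is symmetric, $\sum_i\varepsilon_i\Ab_i$ has the same law as $\sqrt m\,\g$ with $\g\sim\cN(0,\Ib_p)$, so this expression equals $\tfrac{4}{\sqrt m}\,\EE\sup_{\x\in K}|\langle \g,\x\rangle| = \tfrac{4 w(K)}{\sqrt m}$. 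Combining this with the concentration bound completes the proof.

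The step I expect to require the most care is the contraction argument: the Ledoux–Talagrand comparison is cleanest without the outer absolute value, so one must either cite the version that retains it (paying the extra constant that the factor $4$ in the statement leaves room for) or first split $\sup_{\x}|\cdot|$ into $\sup_{\x}(\cdot)$ and $\sup_{\x}(-\cdot)$ and contract each separately before recombining. The remaining ingredients — the Lipschitz computation for Gaussian concentration, the symmetrization step, and the rotation-invariance identity $\sum_i\varepsilon_i\Ab_i \stackrel{d}{=}\sqrt m\,\g$ — are all routine.
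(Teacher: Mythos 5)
Your proposal is correct: the Lipschitz/Gaussian-concentration step with constant $d(K)/\sqrt{m}$ and the symmetrization--contraction bound $\EE Z \leq 4w(K)/\sqrt{m}$ (factor $2$ from symmetrization, factor $2$ from the contraction principle with the outer absolute value, then $\sum_i \varepsilon_i \Ab_i \stackrel{d}{=} \sqrt{m}\,\g$) reproduce exactly the stated constants. The paper itself offers no proof and simply cites Lemma 2.1 of \citet{plan2014dimension}, whose argument is essentially the one you give, so your write-up matches the intended proof.
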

\begin{proof} 
	See the proof of Lemma 2.1 in \citet{plan2014dimension}.
\end{proof}

In order to connect $\ell_1$ norm to Hamming distance, we need the following result.
\begin{lemma} \label{lem:absolute}
Consider finite number of points $K \subseteq \SSS^{p-1}$. Let $\Ab$ be an $m$-by-$p$ matrix with independent rows $\Ab_i \sim \cN(0, \Ib_p)$ for any $i \in [m]\;$. Suppose
\[
	m \geq \frac{1}{\delta^2}\log{|K|},
\]
then we have
\[
	\sup_{\x \in |K|} \frac{1}{m}\sum_{i=1}^{m}\mathds{1}\bigg\{\big|\big\langle \Ab_i,\x \big\rangle   \big| \leq \delta\bigg\} \leq 2\delta.
\]
with probability at least $1 - \exp(-\delta^2m)$.
\end{lemma}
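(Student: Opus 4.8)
The plan is to reduce the statement to a routine Chernoff plus union-bound argument over the finite set $K$, exploiting the rotational invariance of the Gaussian rows together with the elementary anti-concentration bound for a one-dimensional standard Gaussian. This is morally the same template as the proof of Proposition~\ref{thm:urp}, with Hoeffding's inequality applied to a different family of indicator variables.

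First I would fix a single point $\x \in K \subseteq \SSS^{p-1}$. Since $\Ab_i \sim \cN(0,\Ib_p)$ and $\|\x\|_2 = 1$, the projection $\langle \Ab_i, \x\rangle$ is distributed as $\cN(0,1)$, and these are independent across $i \in [m]$ because the rows of $\Ab$ are independent. The key deterministic fact is the anti-concentration estimate: if $g \sim \cN(0,1)$ then $\Pr(|g| \leq \delta) \leq \tfrac{2\delta}{\sqrt{2\pi}} < \delta$, since the standard Gaussian density is bounded by $1/\sqrt{2\pi} < 1/2$. Consequently the variables $X_i := \mathbbm{1}\{|\langle \Ab_i, \x\rangle| \leq \delta\}$ are i.i.d.\ Bernoulli with mean $p := \Pr(|g|\leq\delta) \leq \delta$.

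Next I would apply Hoeffding's inequality to $\tfrac1m\sum_{i=1}^m X_i$. Because $X_i \in \{0,1\}$ and $\EE[\tfrac1m\sum_i X_i] = p \leq \delta$, the event $\{\tfrac1m\sum_i X_i \geq 2\delta\}$ is contained in $\{\tfrac1m\sum_i X_i - p \geq \delta\}$, so
\[
\Pr\Big(\tfrac{1}{m}\sum_{i=1}^m \mathbbm{1}\{|\langle \Ab_i, \x\rangle| \leq \delta\} \geq 2\delta\Big) \leq \exp(-2m\delta^2).
\]
Finally I would take a union bound over the $|K|$ points of $K$ and use the hypothesis $m \geq \tfrac{1}{\delta^2}\log|K|$, i.e.\ $|K| \leq \exp(m\delta^2)$, to get that the probability that $\sup_{\x\in K}\tfrac1m\sum_i \mathbbm{1}\{|\langle\Ab_i,\x\rangle|\leq\delta\} \geq 2\delta$ is at most $|K|\exp(-2m\delta^2) \leq \exp(-m\delta^2)$, which is the claim.

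There is no substantial obstacle: the argument is entirely standard once one observes that each row contributes an independent standard Gaussian whose per-row "bad'' probability is at most $\delta$. The only point needing a little care is the bookkeeping of constants: the target $2\delta$ is chosen precisely so that the gap to the per-point mean is at least $\delta$, which after Hoeffding gives the $\exp(-2m\delta^2)$ rate — exactly enough to absorb the $|K| \leq \exp(m\delta^2)$ union bound and still leave the stated $\exp(-m\delta^2)$ failure probability.
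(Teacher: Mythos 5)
Your argument is correct and is essentially identical to the paper's own proof: fix a point, note $\langle \Ab_i,\x\rangle\sim\cN(0,1)$ so the per-row bad probability is at most $\delta$, apply Hoeffding's inequality with deviation $\delta$, and union bound over the $|K|$ points using $m\geq\delta^{-2}\log|K|$. No discrepancies worth noting.
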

\begin{proof}
Let $X \sim \cN(0,1)$. For any fixed point $\x \in K$ and any $i \in [m]$, we have
\[
	\Pr(\big|\big\langle \A_i, \x\big\rangle\big| \leq \delta) = \Pr(|X| \leq \delta ) \leq \delta.
\]
Let $Z_i = \mathds{1}(\big|\big\langle \A_i, \x\big\rangle\big| \leq \delta), \; \forall \; i \in [m]$. Then by using Hoeffding's inequality,
\[
	\Pr(\frac{1}{m}\sum_{i=1}^{m}Z_i - \mathbb{E}(Z_1) > \delta) \leq \exp(-2\delta^2m). 
\]
As $\mathbb{E}(Z_1) = \Pr(\big|\big\langle \A_i, \x\big\rangle\big| \leq \delta) \leq \delta$, we conclude that with probability at least $1 - \exp(-2\delta^2m)$,
\[
	\frac{1}{m}\sum_{i=1}^{m}Z_i \leq 2\delta.
\]
By applying union bound over $|K|$ points and setting $m \geq \frac{1}{\delta^2}\log{|K|}$, we complete the proof.
\end{proof}

Now we are ready to prove Theorem \ref{thm:general_K}.
\begin{proof}[Proof of Theorem~\ref{thm:general_K}]
We construct a $\delta$-net of $K$ that is denoted as $\cN_{\delta}$. We assume $m \gtrsim \frac{1}{\delta^2}\log |\cN_{\delta}|$. Applying Proposition \ref{thm:urp} and setting $K = \cN_{\delta}$, we have that
\begin{equation} \label{basicNet0}
	\sup_{\x,\y \in \cN_{\delta}} \big|d_{\Ab}(\x,\y) - d(\x,\y)\big| \leq \delta
\end{equation}
with probability at least $1 - 2\exp(-\delta^2m)$. 

For any two fixed points $\x,\y \in K$, let $\x_1, \y_1$ be their nearest points in $\cN_{\delta}$. Then we have
\begin{align}
	& |d(\x,\y) - d_{\Ab}(\x,\y)| \leq |d(\x,\y) - d(\x_1,\y_1)| + |d(\x_1,\y_1) - d_{\Ab}(\x,\y)| \notag \\ 
	& \overset{(a)}{\leq}  |d(\x_1,\y_1) - d_{\Ab}(\x,\y)| + 2\delta  
	 \leq  |d_{\Ab}(\x_1,\y_1) - d_{\Ab}(\x,\y)| + |d(\x_1,\y_1) - d_{\Ab}(\x_1,\y_1)| + 2\delta \notag\\
	& \overset{(b)}{\leq} |d_{\Ab}(\x_1,\y_1) - d_{\Ab}(\x,\y)| + 3\delta 
	 \leq |d_{\Ab}(\x_1,\y_1) - d_{\Ab}(\x_1,\y)| + |d_{\Ab}(\x_1,\y) - d_{\Ab}(\x,\y)| + 3\delta \notag\\
	& \overset{(c)}{\leq} d_{\Ab}(\y_1,\y) + d_{\Ab}(\x_1,\x) + 3\delta,
\end{align}
where $(a)$ follows from 
\[
	|d(\x,\y) - d(\x_1,\y_1)| \leq |d(\x,\y) - d(\x_1,\y)| +  |d(\x_1,\y) - d(\x,\y_1)| \leq d(\x,\x_1) + d(\x_1,\y_1) \leq 2\delta,
\]
step $(b)$ follows from \eqref{basicNet0}, step $(c)$ follows from the triangle inequality of Hamming distance. Therefore we have
\begin{equation} \label{eq:tmp1}
	\sup_{\x,\y \in K} \big| d_{\Ab}(\x,\y) - d(\x,\y)\big| \leq 2\sup_{\x_1 \in  \cN_{\delta}}\sup_{\substack{\x \in K \\ \|\x-\x_1\|_2 \leq \delta}}d_{\Ab}(\x,\x_1) + 3\delta.
\end{equation}

Next we bound the tail term 
\[
	T  := \sup_{\x_1 \in  \cN_{\delta}}\sup_{\substack{\x \in K \\ \|\x-\x_1\|_2 \leq \delta}}d_{\Ab}(\x,\x_1).
\]

Recall that 
\[
	K^{+}_{\delta} := K\bigcup \big\{\z \in \SSS^{p-1}: \z = \frac{\x - \y}{\|\x - \y\|_2}, \; \forall \; \x,\y \in K\;\text{if}\;\delta^2 \leq \|\x - \y\|_2 \leq \delta\big\}.
\]
Now we construct a $\delta$-net for $K^+_{\delta} \setminus K$ denoted as $\cN_{\delta}'$. For two distinct points $\x,\y \in \cN_{\delta}'\bigcup \cN_{\delta}$, let $C(\x,\y)$ denote the unit circle spanned by $\x,\y$. We construct $\delta^2$-net $\cC_{\delta^2}(\x,\y)$ for each circle $\cC(\x,\y)$. For simplicity, we just let $\cC_{\delta^2}(\x,\y)$ be the set of points that uniformly split $\cC(\x,\y)$ with interval $\delta^2$. We thus have $|\cC_{\delta^2}(\x,\y)| \lesssim \frac{1}{\delta^2}$. Let $\cG_{\delta}$ denote the union of all circle nets $\cC_{\delta^2}(\x,\y)$ spanned by points in $\cN_{\delta}'\bigcup \cN_{\delta}$, namely
\[
	\cG_{\delta} := \bigcup_{\forall\;\x,\y \in \cN_{\delta}'\bigcup \cN_{\delta}} \cC_{\delta^2}(\x,\y)\cup\{\x,\y\}.
\]
For any point $\x \in K$, we can always find a point in $\cG_{\delta}$ that is $O(\delta^2)$ away from $\x$. To see why the argument is true, we first let $\x_1$ be the nearest point to $\x$ in $\cN_{\delta}$. If $\|\x - \x_1\|_2 \leq \delta^2$, then $\x_1$ is the point we want. Otherwise, we have $\delta^2 \leq \|\x - \x_1\|_2 \leq \delta$. In this case, we have $(\x - \x_1)/\|\x - \x_1\| \in K^{+}$. Following the definition of $K^{+}_{\delta}$, we can always find a point $\x_1' \in \cN_{\delta}'\bigcup \cN_{\delta}$ such that
\begin{equation} \label{nearest_condition}
	\big\| \x_1' - \frac{\x - \x_1}{\|\x - \x_1\|_2}\big\|_2 \leq \delta,
\end{equation}
thereby
\[
	\big\|\x - \underbrace{\big(\|\x - \x_1\|_2\x_1' + \x_1\big)}_{\z}\big\|_2 \leq \delta\|\x - \x_1\|_2 \leq \delta^2.
\]
Note that $\|\z\|_2$ is very close to $1$ because 
\[
	\delta^4 \geq \|\x - \z\|_2^2 \geq \|\z\|_2^2 - 2\big\langle \z,\x\big\rangle + 1 \geq \|\z\|_2^2 - 2\|\z\|_2 + 1 = (\|\z\|_2 - 1)^2.
\] 
We thus have
\[
	\big\|\x - \z/\|\z\|_2\big\|_2 \leq \|\x - \z\|_2 + \big\|\z - \z/\|\z\|_2\big\|_2 =  \|\x - \z\|_2 + \big|\|\z\|_2 - 1\big| \leq 2\delta^2.
\]
Note that $\z$ is in the unit circle $\cC(\x,\x_1')$ spanned by $\x$ and $\x_1'$, thereby there exists $\u \in \cC_{\delta^2}(\x_1,\x_1')$ such that
$\|\u - \x\|_2 \leq \delta^2$. Point $\u$ thus satisfies 
\begin{equation} \label{u we want}
	\|\x - \u\| \leq \|\x - \z\|_2 + \|\z - \u\|_2 \leq 3\delta^2.
\end{equation}
So for any $\x \in K$ and its nearest point $\x_1 \in \cN_{\delta}$, we define $\u$ as
\[
\u := \left\{
  \begin{array}{lr}
     \x_1, & \|\x - \x_1\|_2 \leq \delta^2;  \\
     \argmin_{\v \in \cC_{\delta^2}(\x_1,\x'_1)}\|\x - \v\|_2, & \text{otherwise}.
  \end{array}
  \right.
\]
where $\x_1' \in \cN_\delta \bigcup \cN'_{\delta}$ and satisfies (\ref{nearest_condition}). Based on (\ref{u we want}), we always have $\|\u - \x\|_2 \leq 3\delta^2$ and $\|\u - \x_1\|_2 \leq \|\u - \x\|_2+\|\x-\x_1\|_2 \leq 2\delta$.

By triangle inequality of Hamming distance, 
\[
	d_{\Ab}(\x,\x_1) \leq d_{\Ab}(\x,\u) + d_{\Ab}(\u,\x_1).
\] 
We thus have
\begin{align*}
	T & \leq \sup_{\x_1 \in \cN_{\delta}}\sup_{\substack{\x \in K \\\|\x - \x_1\|_2}}  d_{\Ab}(\x,\u) + d_{\Ab}(\u,\x_1) \\
	&\leq \underbrace{\sup_{\u \in \cG_{\delta}} \sup_{\substack{\x \in K \\ \|\x - \u\|_2 \leq 3\delta^2}} d_{\Ab}(\x,\u)}_{T_1} + \underbrace{\sup_{\x,\y \in \cN_{\delta}\bigcup \cN'_{\delta}}\sup_{\substack{\u,\v \in \cC(\x,\y) \\ \|\u - \v\|_2 \leq 2\delta}} d_{\Ab}(\u,\v)}_{T_2}.
\end{align*}
Next we bound term $T_1$ and $T_2$ respectively.

\noindent {\bf Term $T_1$.}  
For a fixed point $\u \in \cG_{\delta}$, using Lemma \ref{lem:l1} by setting  $(K, t)$ in the statement to be $K' = (K - \{\u\})\bigcap\{\u \in \RR^p: \|\u\|_2 \leq 3\delta^2\}$ and $\delta^2$ respectively yields that 
\begin{align*}
	& \Pr\bigg\{\sup_{\substack{\x \in K \\ \|\x - \u\|_2 \leq 3\delta^2}} \bigg|\frac{1}{m}\sum_{i=1}^{m}\big|\big\langle\Ab_i,\x-\u\big\rangle\big| - \sqrt{\frac{2}{\pi}}\|\x-\u\|_2\bigg| \geq \frac{4w(K')}{\sqrt{m}} + \delta^2\bigg\} \\
	\leq & 2\exp\big(-\frac{m\delta^4}{2d(K')^2}\big) \leq 2\exp(-m/18).
\end{align*}
Then with probability greater than $1 - 2\exp(-m/18)$,
\[
	\sup_{\substack{\x \in K \\ \|\x - \u\|_2 \leq 3\delta^2}} \frac{1}{m}\sum_{i=1}^{m}\big|\big\langle\Ab_i,\x-\u\big\rangle\big| 
	\leq 3\sqrt{\frac{2}{\pi}}\delta^2 + 4w(K')/\sqrt{m} + \delta^2 
	\leq  5\delta^2,
\]
where the last inequality follows from the fact that $w(K') \lesssim w(K)$ and our assumption $m \gtrsim w(K)^2/\delta^4$. We define event
\[
	\cE := \bigg\{\sup_{\u \in \cG_{\delta}} \sup_{\substack{\x \in K \\ \|\x - \u\|_2 \leq 3\delta^2}} \frac{1}{m}\sum_{i=1}^{m}\big|\big\langle\Ab_i,\x-\u\big\rangle\big| \leq 5\delta^2\bigg\}.
\]
Applying union bound over all points in $\cG_{\delta}$, we have
\[
	\Pr( \cE^c ) \leq 2|\cG_{\delta}|\exp(-m/18) \leq 2\exp(-m/36),
\]
where the last inequality holds with $m \gtrsim \log|\cG_\delta|$. Under condition event $\cE$ happens, we have
\begin{equation} \label{temp1}
	\sup_{\u \in \cG_{\delta}}\sup_ {\substack{\x \in K \\ \|\x - \u\|_2 \leq 3\delta^2}} \frac{1}{m} \sum_{i=1}^{m} \mathds{1}\bigg\{\big|\big\langle \Ab_i, \u - \x\big\rangle\big| \leq 5\delta\bigg\} \geq 1 - \delta.
\end{equation}
If $\sign\big(\langle \Ab_i, \u\big\rangle\big) \ne  \sign\big(\langle \Ab_i, \x\big\rangle  \big)$, we must have $\big|\big\langle \Ab_i, \u \big\rangle\big| \leq \big|\big\langle \Ab_i, \u - \x \big\rangle\big|$. We then have
\begin{align*}
	T_1 & \leq \sup_{\u \in \cG_{\delta}}\sup_{\substack{\x \in K \\ \|\x - \u\|_2 \leq 3\delta^2}}\frac{1}{m}\sum_{i=1}^{m}\mathds{1}\bigg\{\big|\big\langle \Ab_i, \u \big\rangle\big| \leq \big|\big\langle \Ab_i, \u - \x \big\rangle\big|\bigg\} \\
	& \leq \sup_{\u \in \cG_{\delta}} \frac{1}{m}\sum_{i=1}^{m}\mathds{1}\bigg\{\big|\big\langle \Ab_i, \u \big\rangle\big| \leq 5\delta\bigg\} + \delta,
\end{align*}
where the last inequality follows from (\ref{temp1}). Using Lemma \ref{lem:l1} by setting $K$ and $\delta$ in the statement to be $\cG_{\delta}$ and $5\delta$ respectively, we have that, when $m \geq c\frac{1}{\delta^2}\log|\cG_{\delta}|$ with some absolute constant $c$, the following inequality 
\[
	\sup_{\u \in \cG_{\delta}} \frac{1}{m}\sum_{i=1}^{m}\mathds{1}\bigg\{\big|\big\langle \Ab_i, \u \big\rangle\big| \leq 5\delta\bigg\} \leq 10\delta
\]
holds with probability at least $1 - \exp(-25\delta^2m)$. Putting all ingredients together, we have $T_1 \leq 11\delta$ with high probability.

\noindent{\bf Term $T_2$.} 
There are at most $|\cN_{\delta}\bigcup \cN'_{\delta}|^2$ different two-dimensional subspaces constructed from $\cN_{\delta}\bigcup \cN'_{\delta}$. Applying Lemma \ref{lem:2dimension_space} and probabilistic union bound over all subspaces yields that
\[
	\Pr\bigg(  T_2 \geq  (C+2)\delta\bigg) \leq 3\big|\cN_{\delta}\bigcup \cN'_{\delta}\big|^2\exp(-\delta^2m) \leq 3\exp(-\delta^2m/2),
\]
where the last inequality holds by setting $m \gtrsim \frac{1}{\delta^2}\log|\cN_{\delta}\bigcup \cN'_{\delta}|$.

Putting \eqref{eq:tmp1} and the upper bounds of term $T$ together,  we conclude that by choosing 
\[
	m \gtrsim \max\bigg\{ w(K)^2/\delta^4,\; \log|\cG_\delta|,\; \frac{1}{\delta^2}\log|\cN_{\delta}\bigcup \cN'_{\delta}|\bigg\},
\]
we have
\[
	\sup_{\x,\y \in K}|d_{\Ab}(\x,\y) - d(\x,\y)| \lesssim \delta.
\]
with probability at least $1 - c_1\exp(-c_2\delta^2m)$ where $c_1,c_2$ are some absolute constants.

Using the fact that 
\[
	|\cG_{\delta}| \lesssim \frac{1}{\delta^2}|\cN_{\delta}\bigcup \cN'_{\delta}|
\]
and 
\[
	\log|\cN_{\delta}\bigcup \cN'_{\delta}| \lesssim \frac{1}{\delta^2}w(\cN_{\delta}\bigcup \cN'_{\delta})^2 \leq \frac{1}{\delta^2}w(K^+_{\delta})^2,
\]
we complete the proof.
\end{proof}

\appendix
\section{Proof of Lemma \ref{lem:followFastJL}} \label{proof:lem:followFastJL}
\begin{proof}
Recall that $\y_i = \sqrt{\frac{p}{m}}\Phi(\bzeta)\cdot\x_i$. We let
\[
	\widehat{\y_i} = \frac{\y_i}{\|\y_i\|_2}, \widehat{\y_j} = \frac{\y_j}{\|\y_j\|_2}. 
\]
From condition (\ref{c2}), we have
\begin{equation}
	\label{tmp23}
	\|\y_i - \widehat{\y_i}\|_2 \leq \delta, \|\y_j - \widehat{\y_j}\|_2 \leq \delta.
\end{equation}
Let $\theta = \angle(\x_i,\x_j)$, $\theta' = \angle(\widehat{\y_i},\widehat{\y_j})$.
Without loss of generality, we assume our set $K = \{\x_i\}_{i=1}^{N}$ is symmetric, i.e., if $\x \in K$ then $-\x \in \K$. Suppose we show for any two points $\x_i,\x_j$ with $\big\langle \x_i,\x_j\big\rangle > 0$, inequality (\ref{bound1}) holds, then for $\x_i, \x_j$ with $\big\langle \x_i,\x_j\big\rangle < 0$, we immediately have
\[
	\big|d(\y_i,\y_j) - d(\x_i,\x_j)\big| = \big|1 - d(\y_i,\y_j) - \big(1 - d(\x_i,\x_j)\big)\big| = \big|d(-\y_i,\y_j) - d(-\x_i,\x_j)\big| \leq C\delta.
\]
In the second equality, we use $d(-\x,\y) + d(\x,\y) = 1,\;\forall\;\x,\y \in \SSS^{p-1}$. In the last inequality, we use the fact that fast JL transform $\sqrt{\frac{p}{m}}\Phi(\bzeta)$ is linear thus $-\y_i =  \sqrt{\frac{p}{m}}\Phi(\bzeta)(-\x_i)$. Therefore, without loss of generality, we assume $\langle\x_i,\x_j\rangle \geq 0$ thus $\theta \leq \frac{\pi}{2}$.
	
Now we turn to the following quantity
\begin{align*}
	\big\|\widehat{\y_i} - \widehat{\y_j}\big\|_2 
	& =   \big\|\widehat{\y_i} - \y_i + \y_i - \y_j + \y_j - \widehat{\y_j}\big\|_2 \\
	& \leq  \big\|\widehat{\y_i} - \y_i\big\|_2 + \big\|\widehat{\y_j} - \y_j\big\|_2 + \big\|\y_i - \y_j\big\|_2 
	\leq  2\delta + \|\x_i - \x_j\|_2(1 + \delta).
\end{align*}
The last inequality follows from (\ref{tmp23}) and condition (\ref{c1}). Similarly, we also have
\[
	\big\|\widehat{\y_i} - \widehat{\y_j}\big\|_2 \geq  \|\x_i - \x_j\|(1 - \delta) - 2\delta.
\]
Using the fact that 
\[
	\sin\frac{\theta'}{2} = \frac{\big\|\widehat{\y_i} - \widehat{\y_j}\big\|_2}{2},
	\sin\frac{\theta}{2} = \frac{\big\|\x_i - \x_j\big\|_2}{2},
\]
we have
\[
	\big| \sin\frac{\theta'}{2} - \sin\frac{\theta}{2}\big| = \big| \frac{\big\|\widehat{\y_i} - \widehat{\y_j}\big\|_2}{2} - \frac{\big\|\x_i - \x_j\big\|_2}{2} \big| \leq \delta + \delta\frac{\big\|\x_i - \x_j\big\|_2}{2} \leq 2\delta.
\]
When $\delta < \frac{\sqrt{3} - \sqrt{2}}{4}$, we have
\[
	\sin\frac{\theta'}{2} \leq \sin\frac{\theta}{2} + \frac{\sqrt{3} - \sqrt{2}}{2} \leq \frac{\sqrt{3}}{2}.
\]
In the last inequality, we use $\sin\frac{\theta}{2} \leq \frac{\sqrt{2}}{2},\;\forall \;\theta \in [0,\pi/2]$. So $\theta'/2 \in [0,\pi/3]$.
Using the fact that, for any two $\theta, \theta' \in [0,\pi/3]$, there exists constant $c$ such that
\[
	\big|\sin\theta - \sin\theta'\big| \geq c\big|\theta - \theta'\big|,
\]
we have that
\[
	\big|\frac{\theta}{2} - \frac{\theta'}{2}\big| \leq \frac{1}{c}\big| \sin\frac{\theta'}{2} - \sin\frac{\theta}{2}\big| \leq \frac{2\delta}{c}.
\]
Therefore,
\[
	\big| d(\y_i, \y_j) - d(\x_i,\x_j) \big| = \frac{1}{\pi}\big|\theta - \theta'\big|\leq C\delta.
\]
In the case $\delta > \frac{\sqrt{3}-\sqrt{2}}{4}$, trivially we have $\big|d(\y_i,\y_j) - d(\x_i,\x_j)\big| \leq 2 \leq C\delta$ with constant $C = \frac{8}{\sqrt{3}-\sqrt{2}}$.
\end{proof}

\section{Proof of Lemma \ref{weak_rank_bound}} \label{proof:lem:weak_rank_bound}
\begin{proof}
For positive semidefinite matrix $\Mb \in \RR^{p \times p}$ with rank $d$, let $\lambda_1,\lambda_2,...\lambda_d$ be its positive eigenvalues. Using the definition of Frobenius norm, we have
\begin{equation*}
	\|\Mb\|_F^2 = \sum_{i=1}^{d} \lambda_i^2  = \sum_{i,j \in [n]} (\Mb_{i,j})^2 \leq p + (p^2- p)\delta_2^2.
\end{equation*}
On the other hand, considering the trace of $\Mb$, we can obtain
\begin{equation} \label{tmpp1}
	\sum_{i=1}^{d} \lambda_i = {\rm Trace}(\Mb) \geq p(1 - \delta_1).
\end{equation}
Using Cauchy-Schwarz inequality, we have
\begin{equation} \label{tmpp2}
	(\sum_{i=1}^{d} \lambda_i)^2 \leq d \sum_{i=1}^{d} \lambda_i^2.
\end{equation}
Plugging (\ref{tmpp1}) and (\ref{tmpp2}) into the above inequality yields
\[
	d \geq \frac{p(1 - \delta_1)^2}{1 + (p- 1)\delta_2^2}.
\]
\end{proof}

\section{Proof of Lemma \ref{2dimension_space}} \label{proof:2dimension_space}
\begin{proof}
Without loss of any generality, we assume $K = \{\x \in \SSS^{p-1}: \supp(\x) \subseteq \{1,2\} \}$. We begin with constructing a $\delta$-net denoted as $\cN_{\delta}$ for set $K$. For simplicity, we can just let $\cN_{\delta}(K)$ be the set of points that split the circle spanned by $\{\e_1,\e_2\}$ uniformly. Therefore $|\cN_{\delta}(K)| = O(\frac{1}{\delta})$. Applying Proposition \ref{thm:urp} gives us 
\begin{equation} \label{basicNet}
	\sup_{\x,\y \in \cN_{\delta}} |d_A(\x,\y) - d(\x,\y)| \leq \delta,
\end{equation}
holds with probability at least $1 - 2\exp{(-\delta^2m)}$ when $m \gtrsim \frac{1}{\delta^2}\log(\frac{1}{\delta})$.

For any point $\x \in K$, $\big\langle\Ab_i,\x\big\rangle$ only depends on the first two coordinates of $\Ab_i$. Therefore, for simplicity, we let $\Ab'_{i} = \frac{\Ab_i\odot(\e_1 + \e_2)}{\|\Ab_i\odot(\e_1 + \e_2)\|_2},\; \forall\; i \in [m]$. For any point say $\x_1 \in \cN_{\delta}$, using the uniform distribution of $\Ab'_i$, we have
\[
	\Pr( \big|\big\langle \Ab_i', \x_1\big\rangle\big| \leq \delta) \lesssim C\delta,
\]
holds with some absolute constant $C$. Using Hoeffding's inequality and probabilistic union bound over all points in $\cN_{\delta}$, we have
\begin{equation} \label{distanceBound}
	\Pr\bigg( \sup_{\x \in \cN_{\delta}} \frac{1}{m}\sum_{i=1}^{m}\mathds{1}\big\{\big|\big\langle \Ab_i, \x\big\rangle\big| \leq \delta\big\} > (C+1)\delta\bigg) \leq |\cN_{\delta}|\exp(-2\delta^2m) \leq \exp(-\delta^2m).
\end{equation}
The last inequality holds when $m \gtrsim \frac{1}{\delta^2}\log\frac{1}{\delta}$. 

Now we consider any point $\x \in K$. Suppose $\x_1$ is the closest point to $\x$ in $\cN_{\delta}$. We note that if $\sign\big(\big\langle \Ab_i', \x\big\rangle\big) \ne \sign\big(\big\langle \Ab_i', \x_1\big\rangle\big)$, then there exists $\lambda \in [0,1]$ such that
\[
	\big\langle \Ab_i', \lambda\x + (1 - \lambda)\x_1\big\rangle = 0.
\]
We thus have 
\[
	\big|\big\langle\Ab_i',\x_1 \big\rangle\big| = \lambda \big|\big\langle\Ab_i',\x - \x_1 \big\rangle\big| \leq \lambda \|\x-\x_1\|_2 \leq \delta.
\]
Further we obtain that
\begin{align*}
	 \sup_{\x_1 \in \cN_{\delta}}\sup_{\substack{\x \in K \\ \|\x-\x_1\|_2\leq \delta} } d_{\Ab}(\x,\x_1) 
	& =  \sup_{\x_1 \in \cN_{\delta}}\sup_{\substack{\x \in K \\ \|\x-\x_1\|_2\leq \delta} }   \frac{1}{m}\sum_{i=1}^{m}\mathds{1}(\sign\big(\big\langle \Ab_i', \x\big\rangle\big)  \ne  \sign\big(\big\langle \Ab_i', \x_1\big\rangle\big)) \\
	& \leq \sup_{\x_1 \in \cN_{\delta}}\frac{1}{m}\sum_{i=1}^{m}\mathds{1}\big\{\big|\big\langle \Ab_i, \x_1\big\rangle\big| \leq \delta\big\}.
\end{align*}
Combining the above result with (\ref{distanceBound}), we obtain that, with probability at least $1 - \exp(-\delta^2m)$,
\begin{equation} \label{smalltailbound}
	\sup_{\x_1 \in \cN_{\delta}}\sup_{\substack{\x \in K \\ \|\x-\x_1\|_2\leq \delta} } d_{\Ab}(\x,\x_1) \leq (C+1)\delta.
\end{equation}

For any points $\x,\y \in K$, let $\x_1,\y_1$ be their nearest points in $\cN_{\delta}$. We have
\begin{align*}
	& |d(\x,\y) - d_{\Ab}(\x,\y)|  \leq |d(\x,\y) - d(\x_1,\y_1)| + |d(\x_1,\y_1) - d_{\Ab}(\x,\y)| \notag \\ 
	& \overset{(a)}{\leq}  |d(\x_1,\y_1) - d_{\Ab}(\x,\y)| + 2\delta  \notag \leq |d(\x_1,\y_1) - d_{\Ab}(\x_1,\y_1)| + |d_{\Ab}(\x_1,\y_1) - d_{\Ab}(\x,\y)| + 2\delta \notag\\
	& \overset{(b)}{\leq} |d_{\Ab}(\x_1,\y_1) - d_{\Ab}(\x,\y)| + 3\delta  \leq |d_{\Ab}(\x_1,\y_1) - d_{\Ab}(\x_1,\y)| + |d_{\Ab}(\x_1,\y) - d_{\Ab}(\x,\y)| + 3\delta \notag\\
	& \overset{(c)}{\leq} d_{\Ab}(\y_1,\y) + d_{\Ab}(\x_1,\x) + 3\delta \overset{(d)}{\leq} (2C+5)\delta,
\end{align*}
where $(a)$ follows from 
\[
	|d(\x,\y) - d(\x_1,\y_1)| \leq |d(\x,\y) - d(\x_1,\y)| +  |d(\x_1,\y) - d(\x,\y_1)| \leq d(\x,\x_1) + d(\x_1,\y_1) \leq 2\delta,
\]
step $(b)$ follows from \eqref{basicNet}, step $(c)$ follows from the triangle inequality of Hamming distance, step $(d)$ is from \eqref{smalltailbound}. 
\end{proof}

{\small
\bibliographystyle{plainnat}
\bibliography{binary_embedding}
}

\end{document}